\newcommand{\lb} {\left}
\newcommand{\rb} {\right}
\newcommand{\nn} {\nonumber}
\newcommand{\Ei}{\mathrm{Ei}}
\newtheorem{lemma}{Lemma}
\newenvironment{proof}[1][Proof:]{\begin{trivlist}
\item[\hskip \labelsep {\itshape #1}]}{\end{trivlist}}
\begin{document}
\onecolumn{\noindent © 2023. Personal use of this material is permitted. Permission from authors must be obtained for all other uses, in any current or future media, including reprinting/republishing this material for advertising or promotional purposes, creating new collective works, for resale or redistribution to servers or lists, or reuse of any copyrighted component of this work in other works.}
 \twocolumn{
\title{Transmitter Selection for Secrecy Against  Colluding Eavesdroppers with Backhaul Uncertainty}

\author{Burhan Wafai,~\IEEEmembership{Student Member,~IEEE}, Ankit Dubey,~\IEEEmembership{Member,~IEEE}, and  Chinmoy Kundu,~\IEEEmembership{Member,~IEEE}
\thanks{Burhan Wafai and Ankit Dubey are with Indian Institute of Technology Jammu, India
(email: burhan.wafai@iitjammu.ac.in and ankit.dubey@iitjammu.ac.in).}
\thanks{Chinmoy Kundu is with University College Dublin, Ireland (email: chinmoy.kundu@ucd.ie).}

\thanks{This publication has emanated from research supported in part by the Science Foundation Ireland (SFI) under Grant Number 17/US/3445 and 22/IRDIFA/10425, the Tata Consultancy Services (TCS) Foundation through its TCS Research Scholar Program, and by the Department of Science and Technology (DST), India sponsored project TMD/CERI/BEE/2016/059.
} 

}
\maketitle
\thispagestyle{empty}
\pagestyle{empty}
\pagestyle{plain} 
\begin{abstract}
Due to the exponential growth of interconnected devices and reduced cell coverage, beyond fifth-generation networks will be dense. Thus, instead of wired backhaul, wireless backhaul will be cost-effective and flexible. For security in multi-transmitter systems, sub-optimal and optimal transmitter selection schemes exist. However, including backhaul activity knowledge available (BKA) and backhaul activity knowledge unavailable (BKU) cases and transmitter selection schemes, there is no generalized secrecy analysis method. Moreover, evaluation of the ergodic secrecy rate (ESR) of the optimal selection schemes is impossible using existing solution approaches. 
To address these, we propose two sub-optimal and optimal transmitter selection schemes for a small-cell multi-transmitter system in BKU or BKA cases in the presence of multiple colluding eavesdroppers.  
We derive the distribution of the ratio of the destination channel SNR and eavesdropping channel SNR, thereby providing different secrecy performance metrics uniformly irrespective of selection schemes and BKU or BKA cases. 
Simplified asymptotic expressions are provided to elucidate the influence of the system parameters and of the backhaul reliability. 
We observe that the secrecy performance improves when the knowledge of backhaul link activity is utilized, and the improvement is most noticeable when the backhaul is highly unreliable. We also observe that while the secrecy performance degrades with an increasing number of eavesdroppers, neither the asymptotic saturation value of the secrecy outage probability nor the rate of improvement of the ESR with signal-to-noise-ratio depends on the number of eavesdroppers.   
\end{abstract}

\begin{IEEEkeywords}
Backhaul uncertainty, channel state information, colluding eavesdroppers, ergodic secrecy rate, non-zero secrecy rate, physical layer security, secrecy outage probability.
\end{IEEEkeywords}


\maketitle

\section{INTRODUCTION}
\IEEEPARstart{W}{ireless} network infrastructures in the  fifth-generation (5G) and beyond technologies
will experience an exponential growth of interconnected devices.
(BSs) or access points (APs) making the network highly dense and of heterogeneous \cite{Xiaohu20145G}. 
for the dense inter-connected nodes since it enables the communication between the AP and several transmitters  \cite{Osseiran20145G}. 
In practical systems, the incorporation of wireless backhaul is an easy, flexible, and cost-efficient alternative to wired backhaul.  However, wireless backhaul connections are unreliable as the backhaul links have a certain probability of failure  \cite{OnurBackhaul2022}. 


Physical layer security (PLS) has recently emerged as a means of securing wireless networks, which are constantly at risk from eavesdropping attacks \cite{wyner_wiretap},\cite{Gamal_On_the_Sec_Cap_Fad_Ch}. The transmitter selection in a multi-source network can improve the PLS by increasing the diversity gain and hence the secrecy of the system without using multiple antennas \cite{bletsas2006, Kundu17_sopeaves,chinmoy_GC16, Kundu15_dual}.
Based on the availability of the channel state information (CSI), the transmitter selection can be classified into two categories; optimal transmitter selection (OTS) and sub-optimal transmitter selection (STS). An OTS scheme requires global CSI, whereas an STS scheme does not require global CSI \cite{chinmoy_GC16, Vu2017Secure, Kundu15_dual}. Hence, STS schemes can reduce the complexity and power consumption of the network and thereby extending the lifetime of the network. The STS scheme performed by maximizing the destination channel rate without considering the eavesdropping rate has been referred to as the traditional transmitter selection (TTS) scheme in the literature \cite{kim2016secrecy,  Vu2017Secure}. The authors in \cite{Kundu17_sopeaves} proposed an STS scheme by selecting a transmitter corresponding to the worst eavesdropping link without considering the destination rate to improve the secrecy of the system. We refer to this scheme as a minimal eavesdropping selection  (MIN-ES) scheme.


The reliability of the backhaul links affects the system performance, therefore, it is important to explore the effect of the backhaul uncertainty on the secrecy performance of transmitter selection schemes which have extensively been studied in  \cite{Yincheng,TAKhan2015Coperative,kimBackhaulSelection, Debbah2012_het_backhaul, Kundu_TVT19, wafaiVTC, kotwal2021transmitter,chinmoy_letter2021,kotwalTVT_backhaul}. Although the knowledge of backhaul link activity is crucial, it might not always be available for transmitter selection \cite{Kundu_TVT19}.  The unavailability of backhaul link activity knowledge, which we denote as the backhaul link activity knowledge unavailable (BKU) case, can result in selecting a transmitter with an inactive backhaul link.  When the backhaul link activity knowledge is available, which we denote as the backhaul link activity knowledge available (BKA) case, the selection can be made from the set of transmitters with active backhaul links \cite{wafaiVTC,kotwal2021transmitter,chinmoy_letter2021,kotwalTVT_backhaul}. The knowledge of active backhaul links improves the secrecy performance.


In \cite{Kundu_TVT19}, the authors studied the secrecy of the cognitive radio (CR) network in the presence of unreliable backhaul links in the BKU case; however, the 
BKA case was not considered for the transmitter selection.
The non-zero secrecy rate (NZSR), secrecy outage probability (SOP), and ergodic secrecy rate (ESR) were evaluated for a number of transmitter selection schemes, such as MIN-ES, TTS, OTS, and the transmitter selection scheme with minimal interference.
The authors in \cite{wafaiVTC} then extended the SOP analysis in the same system for the TTS and OTS schemes for the BKA case.
In \cite{kotwal2021transmitter},  the SOP for the TTS and OTS schemes in a frequency selective fading channel was evaluated for the BKA case;
however, the ESR was not evaluated for the selection schemes. Though, the authors in \cite{kotwalTVT} evaluated the ESR of the OTS
scheme in the same channel model but did not show the effect of backhaul links.   In all these papers \cite{Kundu_TVT19,wafaiVTC,kotwal2021transmitter,kotwalTVT}, the secrecy performance was evaluated for a single eavesdropper.

In the case of multiple eavesdroppers, the ESR of the OTS scheme was evaluated in \cite{chinmoy_letter2021} for the BKA case under the frequency-flat fading channel condition. 
However, the SOP was not evaluated, and the MIN-ES and TTS schemes were not considered. In \cite{kotwalTVT_backhaul}, the authors evaluated the SOP and ESR for TTS and OTS schemes including multiple eavesdroppers in the BKA case under the frequency-selective fading channels condition. However, the MIN-ES scheme was not studied. Although the authors in \cite{chinmoy_letter2021} and \cite{kotwalTVT_backhaul} considered multiple eavesdroppers, a straightforward non-colluding case was considered where an eavesdropper with maximum instantaneous SNR determines the secrecy performance.  Moreover,  in all the above-mentioned papers \cite{wafaiVTC,kotwal2021transmitter,kotwalTVT,chinmoy_letter2021,kotwalTVT_backhaul}, the MIN-ES scheme was not considered with wireless backhaul links.

To the best of the authors' knowledge, transmitter selection schemes with colluding eavesdroppers performing maximal ratio combining (MRC) have not been explored for both the backhaul link activity knowledge cases (BKU and BKA). Although the secrecy performance of networks with wireless backhaul links has been widely studied in the above-mentioned works, an effort to provide a generalized mathematical framework that incorporates both the backhaul uncertainty (BKU and BKA) cases irrespective of different transmitter selection schemes leading to all the secrecy performance metrics (NZSR, SOP, and ESR) is missing from the literature.

Motivated by the above discussion, we consider a wireless backhaul-aided network with multiple transmitters and a single destination in the presence of multiple colluding eavesdroppers. 
Two cases of backhaul link activity knowledge, BKU and BKA cases are considered.  To enhance the system's secrecy performance, we investigate MIN-ES, TTS, and OTS schemes under BKU and BKA cases.
The MIN-ES method has not been studied by \cite{chinmoy_letter2021,wafaiVTC,kotwal2021transmitter,kotwalTVT_backhaul} for secrecy including wireless backhaul. We study the case of colluding eavesdroppers where the eavesdroppers utilize the MRC technique as opposed to  \cite{chinmoy_letter2021,kotwalTVT_backhaul}. Moreover, in \cite{chinmoy_letter2021}, the authors did not consider the SOP analysis and the BKU case both of which we incorporate. The authors in \cite{kotwalTVT_backhaul} considered frequency-selective channels whereas we assume frequency-flat fading channels as a result the analysis is entirely different. We also present a generalized approach to incorporate both backhaul uncertainty (BKU and BKA) cases into the performance analysis by incorporating a mixture distribution method. Our secrecy performance analysis approach by finding the distribution of the ratio of the destination channel and eavesdropping channel SNR uniformly provides the NZSR, SOP, and ESR performances.

The main contributions of this paper are listed as follows:
\begin{itemize}

    \item We evaluate the exact closed-form expressions of the NZSR, SOP, and ESR for the two sub-optimal transmitter selections (MIN-ES and TTS) and the optimal transmitter selection (OTS) schemes in both the BKU and BKA cases. 
 
    \item We consider multiple colluding eavesdroppers performing MRC under independent but non-identical (INID) channels.  
    
     \item To obtain greater insights, simplified asymptotic expressions for the NZSR, SOP, and ESR under both BKU and BKA cases and diversity order for the SOP under perfect backhaul conditions are also included.
     
     \item A general and unified method is provided to incorporate backhaul uncertainty using a mixture distribution and the ratio of destination and eavesdropper channel SNRs which is then directly used to find the SOP, NZSR, and ESR, along with their asymptotes.
     
\end{itemize}
%


\color{black}
The rest of the paper is organized as follows:  system model is described in Section \ref{system model}  followed by distribution of $\Gamma_{\text{R}}^{(n^*)}$ for each selection scheme in Section \ref{sec_distribution_snr_ratio}. The ESR is discussed in Section \ref{section_ergodic_secrecy_rate}  and the asymptotic SOP analysis for unreliable backhaul links is provided in Section \ref{sec_asymptotic_analysis_sop_unreliable}. Section \ref{section_diversity_order} discusses the diversity order for perfect backhaul links and Section \ref{section_asymptotic_ESR} provides the asymptotic ESR. Finally, the numerical results with discussions are presented in Section \ref{section_numerical_results} followed by conclusions in Section \ref{conclusions}.

\textit{Notation:} The probability of occurrence of an event and the expectation of a random variable (RV) $X$ are denoted by $\mathbb{P}[\cdot]$  and $\mathbb{E}_X[\cdot]$, respectively.
The channel coefficient of a link between nodes $\text{A}$ and $\text{B}$ is denoted by $h^{(n)}_{\text{AB}}$,  the end-to-end signal-to-noise-ratio (SNR) which incorporates the backhaul link reliability factor at $\text{B}$  for the link $\text{A-B}$ is denoted as $\hat{\Gamma}_{\text{AB}}$ while as the SNR without the inclusion of backhaul link reliability factor is denoted as ${\Gamma}_{\text{AB}}$. The cumulative distribution function (CDF) of an RV $X$ is denoted by $F_{X}(\cdot)$. Correspondingly, $f_{X}(\cdot)$ is the respective probability density functions (PDFs). 
\section{System and channel model}\label{system model}
We consider a network where multiple transmitters are connected to an access point AP via wireless backhaul links, as presented in Fig. \ref{fig:SM2}. The network consists of $N$ transmitters $\text{S}_n$, where $n\in\{1, \ldots, N\}$, serving an user $\text{D}$ in presence of $K$ eavesdroppers $\text{E}_k$ where $k\in\{1, \ldots, K\}$. Each node in the system is assumed to be equipped with a single antenna. Wireless backhaul links are present between each transmitter and the AP which have a certain probability of link failure.
The backhaul link reliability between AP and $\text{S}_n$ for each $n\in\{1,\ldots N\}$ is modeled by independent Bernoulli RV $\mathbb{I}_n \in\{0,1\}$ with backhaul link reliability factor  $\mathbb{P}[\mathbb{I}_n=1]=s$ and backhaul link failure probability $\mathbb{P}[\mathbb{I}_n=0]=(1-s)$ where $0<s\le 1$ is the backhaul link reliability factor.

\begin{figure}
\centering 
\includegraphics[width=2.7in]{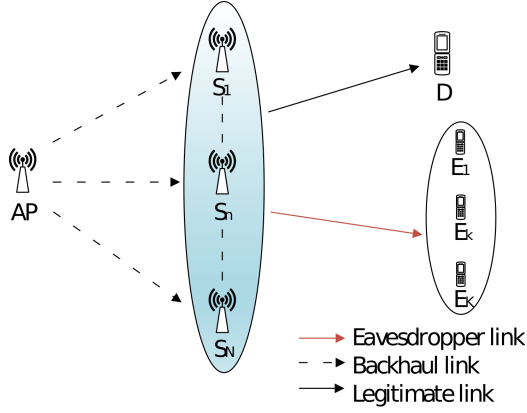} 
\caption{
A multi-transmitter network with unreliable backhaul connections.
}
\vspace{-0.5cm}
\label{fig:SM2}
\end{figure}

The channel gains $h^{(n)}_{\text{SD}}$ of the $\text{S}_n$-$\text{D}$ link for each $n$ is assumed to be independent and identically distributed (i.i.d) exhibiting frequency flat Rayleigh fading.   Since each link is Rayleigh distributed, its power gain $|h^{(n)}_{\text{SD}}|^2$ is exponentially distributed. We assume that $|h^{(n)}_{\text{SD}}|^2$ for each $n$ has a power gain parameter $\lambda_{\text{D}}$. 
Without loss of generality, assuming identical additive white Gaussian noise (AWGN) parameters at each receiver with zero mean and unit variance, the instantaneous SNR of the  $\text{S}_n$-$\text{D}$ link is denoted as $\Gamma^{(n)}_{\text{SD}}=|h^{(n)}_{\text{SD}}|^2$. The corresponding PDF and CDF of $\Gamma^{(n)}_{\text{SD}}$ are expressed as
\begin{align}\label{pdf_SD}
f_{\Gamma^{(n)}_{\text{SD}}}(x)&=\lambda_{\text{D}}\exp(-\lambda_\text{D}x),\\ \label{cdf_SD}
F_{\Gamma^{(n)}_{\text{SD}}}(x)&=1-\exp(-\lambda_\text{D}x),    
\end{align}
respectively.

All $\text{S}_n$-$\text{E}_k$ links for each  $k\in\{1,\ldots, K\}$ and a particular $n\in\{1,\ldots, N\}$ are assumed to be independent non-identically distributed (i.n.i.d) Rayleigh fading channels with power gain parameter $\lambda_{\text{E}}^{(k)}$. For a given $k\in\{1,\ldots, K\}$, all $\text{S}_n$-$\text{E}_k$ links for each $n\in\{1,\ldots, N\}$ are assumed to be i.i.d. All the eavesdroppers corresponding to a transmitter $\text{S}_n$ collude with each other by performing  MRC technique to maximize the eavesdropping SNR as opposed to \cite{chinmoy_letter2021, kotwalTVT_backhaul}, where the worst eavesdropping SNR determines the secrecy performance. Therefore, the PDF and CDF of the eavesdropping SNR $\Gamma^{(n)}_{\text{SE}}$ corresponding to $\text{S}_n$ is written as \cite{MRC_Akkouchi}
\begin{align} \label{PDF_SE}
&f_{\Gamma^{(n)}_{\text{SE}}}(x)=\sum_{k=1}^{K}\frac{\prod_{m=1}^{K}\lambda_{\text{E}}^{(m)}}{\prod\limits_{\substack{j=1\\j\ne k}}^K(\lambda_{\text{E}}^{(j)}-\lambda_{\text{E}}^{(k)})} e^{-\lambda_{\text{E}}^{(k)} x},
\\
\label{CDF_SE}
&F_{\Gamma^{(n)}_{\text{SE}}}(x)=1-\sum_{k=1}^{K}\frac{\prod_{m=1}^{K}\lambda_{\text{E}}^{(m)}}{\lambda_{\text{E}}^{(k)}\prod\limits_{\substack{j=1\\j\ne k}}^K
(\lambda_{\text{E}}^{(j)}-\lambda_{\text{E}}^{(k)})} e^{-\lambda_{\text{E}}^{(k)} x},
\end{align}
respectively, when 
 $\lambda_{\text{E}}^{(j)}\ne\lambda_{\text{E}}^{(k)}$ (as we assume  i.n.i.d eavesdropping links, there is a high probability that this is maintained).
We evaluate the non-zero secrecy rate (NZSR), secrecy outage probability (SOP), and ergodic secrecy rate (ESR) for each transmitter selection scheme in this work, including backhaul link reliability factor  $s$ taken into consideration. Towards this, we first define the achievable secrecy rate corresponding to $\text{S}_n$ in bits per channel use when the corresponding backhaul link is active (i.e., $s=1$) as

\begin{align} \label{secrecy_capacity}
C^{(n)}_{S}=\max\lb\{\log_2\Big(\Gamma_{\text{R}}^{(n)}\Big),0\rb\}.  
\end{align}
where ${\Gamma_{\text{R}}^{(n)}}=\frac{1+{\Gamma}_{\text{SD}}^{(n)}}{1+{\Gamma}_{\text{SE}}^{(n)}}$. The distribution of ${\Gamma_{\text{R}}^{(n)}}$ is obtained as
\begin{align}\label{gamma_r}
F_{\Gamma_{\text{R}}^{(n)}}(x)&=\mathbb{P}\Bigg[\frac{1+{\Gamma}_{\text{SD}}^{(n)}}{1+{\Gamma}_{\text{SE}}^{(n)}}\le x\Bigg]\nn\\
&=\int_{0}^{\infty} F_{{\Gamma}^{(n)}_{\text{SD}}}(x(y+1)-1) f_{{\Gamma}_{\text{SE}}^{(n)}}(y) dy.
\end{align}

$F_{\Gamma_{\text{R}}^{(n)}}(x)$ is then directly used to find the NZSR, SOP, and ESR.
The NZSR is defined as the probability that the achievable secrecy rate is greater than zero, and for $\text{S}_n$ when the corresponding backhaul link is active, it is evaluated as \cite{yang2013physical}
\begin{align}
\label{non_zero_0}
\mathcal{P}^{(n)}_{\text{NZ}}&=\mathbb P [C^{(n)}_{S}>0]
=1-  \int_{0}^{\infty} F_{{\Gamma}^{(n)}_{\text{SD}}}(x) f_{{\Gamma}^{(n)}_{\text{SE}}}(x)dx. 
\end{align}
The SOP for $\text{S}_n$ when the corresponding backhaul link is active is defined as the probability when the achievable secrecy rate $C_{S}^{(n)}$ falls below a threshold secrecy rate $R_{th}$ and is evaluated as \cite{wang2015security,yang2013transmit}
\begin{align}\label{secrecy OP equation}
\mathcal{P}_{\text{out}}^{(n)} (R_{th})
&= \mathbb P\Big[C_{S}^{(n)} \le R_{th}\Big]
\nn\\
&= \int_{0}^{\infty} F_{{\Gamma}^{(n)}_{\text{SD}}}(\rho(x+1)-1) f_{{\Gamma}_{\text{SE}}^{(n)}}(x) dx,
\end{align}
where 
$\rho = 2^{R_{th}}$.
The ESR for $\text{S}_n$ when the corresponding backhaul link is active is defined as 
\cite{chinmoy_letter2021}

\begin{align}\label{ergodic_secrecy_rate}
\mathcal{C}_{\text{erg}}^{(n)}&=\frac{1}{\ln(2)}\int_{1}^{\infty}\log(x)f_{\Gamma_{\text{R}}^{(n)}}(x)dx
\nn \\
&
=\frac{1}{\ln(2)}\int_1^\infty\Bigg(\frac{1-F_{\Gamma_{\text{R}}^{(n)}}(x)}{x}\Bigg)dx.  
\end{align}
It is to be noted from (\ref{gamma_r}) and (\ref{secrecy OP equation}) that the SOP can be obtained from the CDF of $\Gamma_{\text{R}}^{(n)}$ as 
\begin{align}
\label{eq_new_SOP_from_GammaR}
\mathcal{P}_{\text{out}}^{(n)}(R_{th})= F_{\Gamma_{\text{R}}^{(n)}}(\rho),  
\end{align}
and from (\ref{non_zero_0}), (\ref{secrecy OP equation}), and (\ref{eq_new_SOP_from_GammaR}) that the NZSR can also be obtained from the CDF of $\Gamma_{\text{R}}^{(n)}$ or the SOP as \begin{align}
\label{eq_NZSR_from_SOP}
\mathcal{P}^{(n)}_{\text{NZ}}=1-\mathcal{P}_{\text{out}}^{(n)} (0)
=1-F_{\Gamma_{\text{R}}^{(n)}}(1). 
\end{align}

To incorporate the backhaul link reliability factor  in the end-to-end SNR distribution of links from the AP to $\text{D}$ or $\text{E}$ via $\text{S}_n$, we utilize a mixture distribution of the  Bernoulli and exponential RVs. A convex combination of the Bernoulli and exponential distribution is used for the mixture distribution to model the end-to-end $\text{AP}$-$\text{S}_n$-$\text{X}$ link SNR, for $\text{X}\in \{\text{D,E}\}$,  ${\hat{\Gamma}^{(n)}_{\text{SX}}}$ is expressed as 
\cite{Kundu_TVT19}
\begin{align}\label{mixture_distribution}
f_{\hat{\Gamma}^{(n)}_{\text{SX}}}(x)=(1-s)\delta(x)+s f_{\Gamma^{(n)}_{\text{SX}}}(x),     
\end{align}
where 
 $\delta(x)$ is the Dirac delta function,  ${\Gamma^{(n)}_{\text{SX}}}$ is the SNR of the $\text{S}_n$-$\text{X}$ link when the corresponding backhaul link is active.  Here we note that the factor $(1-s)$ associated with $\delta(x)$ represents the probability that the backhaul is inactive. When $s=1$, it suggests that the backhaul link is active. 
It is noted from (\ref{mixture_distribution}) that the application of the mixture distribution generalizes the secrecy performance analysis for the transmitter selection schemes under both the perfect backhaul (i.e., $s=1$) and unreliable backhaul (i.e., $s<1$) conditions.
As the backhaul link is common between any end-to-end link leading to $\text{D}$ and $\text{E}$, the backhaul link reliability factor  has to be incorporated in either $\hat{\Gamma}^{(n)}_{\text{SD}}$ or $\hat{\Gamma}^{(n)}_{\text{SE}}$ during the performance analysis (NZSR, SOP, and ESR) of transmitter selection schemes but not in both as depicted in (\ref{mixture_distribution}).  This is because, for a given selected transmitter $n^*$ on the basis of $\hat{\Gamma}^{(n^*)}_{\text{SX}}$, the corresponding $\text{S}_{n^*}$-$\text{X}$ link must be considered. Otherwise, $\hat{\Gamma}^{(n^*)}_{\text{SD}}$ and $\hat{\Gamma}^{(n^*)}_{\text{SE}}$ would become independent.


In the following sections, we evaluate the NZSR, SOP, and ESR for each transmitter selection scheme under two backhaul link activity knowledge cases, BKU and BKA, with the help of the distribution of ${\Gamma_{\text{R}}^{(n^*)}}$ for a given selected transmitter. The distribution of ${\Gamma_{\text{R}}^{(n^*)}}$ changes according to the transmitter selection schemes and backhaul link activity knowledge, which will be obtained in the next section.  Further, the asymptotic analysis is also carried out in each case to obtain better insights.

The distribution of ${\Gamma_{\text{R}}^{(n^*)}}$  helps us to obtain any secrecy performance (NZSR, SOP, ESR) of both the STS and OTS schemes in a unified manner. It is also noted here that the evaluation of the ESR for the OTS scheme would not have been possible otherwise.



\section{Distribution of $\Gamma_{\text{R}}^{(n^*)}$}\label{sec_distribution_snr_ratio}
In this section, the CDF of $\Gamma_{\text{R}}^{(n^*)}$ for each transmitter selection scheme is evaluated for both BKU and BKA cases. This will then be used to find the NZSR, SOP, and ESR of those transmitter selection schemes. To evaluate the distribution of $\Gamma_{\text{R}}^{(n^*)}$, in the BKU case, the SNR distribution corresponding to the selected transmitter is evaluated first and then the backhaul reliability factor is included in the SNR distribution. This is due to the fact that which backhaul links are available is not known. However, in the BKA case,  the backhaul reliability factor is incorporated in the individual link SNRs first and then the SNR distribution of the selected transmitter is evaluated. This will be clear in the following subsections. 

\subsection{Minimal eavesdropping selection in the BKU case (MIN-ES-BKU)}\label{sub_sec_min_es_bku}
In this section, the transmitter with minimum channel power gain among $\text{S}_n$-$\text{E}$ links where $n\in\{1,\ldots,N\}$ is selected without considering which backhaul links are active as no knowledge of backhaul link activity is available. Hence, the eavesdropping SNR corresponding to the selected transmitter if the backhaul link is active becomes, 
\begin{align}\label{MIN-ES_k} 
\Gamma_{\text{SE}}^{(n^*)} = \min_{n\in \{1,2\ldots,N\}} \{\Gamma_{\text{SE}}^{(n)}\}.
\end{align}
The CDF of $\Gamma_{\text{R}}^{(n^*)}$ under unreliable backhaul conditions including backhaul link reliability factor  is then evaluated 
following (\ref{gamma_r}) as 
\begin{align}\label{gamma_r_min_without_0}
F_{\Gamma_{\text{R}}^{(n^*)}}(y)
&= \int_{0}^{\infty} F_{{\Gamma}^{(n)}_{\text{SD}}}(y(x+1)-1) f_ {\hat{\Gamma}_{\text{SE}}^{(n^*)}}(x) dx.
\end{align}
Due to the common backhaul link between destination and eavesdropping channel, the CDF of $\Gamma_{\text{SD}}^{(n)}$ conditioned on the selected transmitter is $\hat{\Gamma}_{\text{SD}}^{(n)} =\Gamma_{\text{SD}}^{(n)}$. Thus, the backhaul link reliability factor  is considered in $\hat{\Gamma}_{\text{SE}}^{(n^*)}$ but not in $\hat{\Gamma}_{\text{SD}}^{(n)}$ while evaluating (\ref{gamma_r_min_without_0}).
Following the mixture distribution method shown in (\ref{mixture_distribution}), $f_{\hat{\Gamma}_{\text{SE}}^{(n^*)}}(x)$ is evaluated as
\begin{align}
\label{PDF_SE_MIN-ES_without}
f_{\hat{\Gamma}_{\text{SE}}^{(n^*)}}(x)&=(1-s)\delta(x)+sf_{{\Gamma}_{\text{SE}}^{(n^*)}}(x),
\end{align}
where $f_{{\Gamma}_{\text{SE}}^{(n^*)}}(x)$ is provided in (\ref{PDF_SE_MIN-ES}) in \textit{Lemma \ref{lemma1}}. 
Finally, using (\ref{PDF_SE_MIN-ES}) and $F_{\Gamma_{\text{SD}}^{(n)}}(x)$ from (\ref{cdf_SD}), $F_{\Gamma_{\text{R}}^{(n^*)}}(y)$ is evaluated as 
\begin{align}
\label{eq_integral_terms}
&F_{\Gamma_{\text{R}}^{(n^*)}}(y) =(1-s)\int_{0}^{\infty}\delta(x)dx\nn\\
&+s\int_{0}^{\infty}\Big(1-e^{-\lambda_{\text{D}}(y(x+1)-1)}\Big)\sum_{\mathbf{i}\in\mathcal{M}^{(N)}}\binom{N}{i_1,\ldots,i_K}\nn\\
&\times\Bigg(\prod_{k=1}^{K}\Bigg(\frac{\prod_{m=1}^{K}\lambda_{\text{E}}^{(m)}}{\lambda_{\text{E}}^{(k)}\prod\limits_{\substack{j=1\\j\ne k}}^K(\lambda_{\text{E}}^{(j)}-\lambda_{\text{E}}^{(k)})}\Bigg)^{i_k}\Bigg)\tilde{\lambda}_{\text{E}}^{(k)} e^{-\tilde{\lambda}_{\text{E}}^{(k)}x}dx \\
\label{gamma_r_min_without}
&=1-\sum_{\mathbf{i}\in\mathcal{M}^{(N)}}\binom{N}{i_1,\ldots,i_K}\nn\\
&\times\Bigg(\prod_{k=1}^{K}\Bigg(\frac{\prod_{m=1}^{K}\lambda_{\text{E}}^{(m)}}{\lambda_{\text{E}}^{(k)}\prod\limits_{\substack{j=1\\j\ne k}}^K(\lambda_{\text{E}}^{(j)}-\lambda_{\text{E}}^{(k)})}\Bigg)^{i_k}\Bigg)\frac{s\tilde{\lambda}_{\text{E}}^{(k)}e^{-\lambda_{\text{D}} (y-1)}}{(y\lambda_{\text{D}} +\tilde{\lambda}_{\text{E}}^{(k)})},
\end{align}
where   $\mathcal{M}^{(N)}$, $\binom{N}{i_1,\ldots,i_K}$,  $i_{k}$ for each $k\in\{1 ,\ldots, K\}$, and $\tilde{\lambda}_{\text{E}}^{(k)}$ are defined in Lemma \ref{lemma1}.
The first term in (\ref{eq_integral_terms}) represents the case where the backhaul link is inactive; therefore, $F_{{\Gamma}^{(n)}_{\text{SD}}}(x)=1$ for any value of $x$. Whereas the second term represents the case where the backhaul link is active, and hence, $F_{{\Gamma}^{(n)}_{\text{SD}}}(x)$ in the second term follows exponential CDF as in (\ref{pdf_SD}). 

\textit{Remark}: The NZSR  and SOP of the system are evaluated directly from $F_{\Gamma_{\text{R}}^{(n^*)}}(y)$ in (\ref{gamma_r_min_without}) following (\ref{eq_NZSR_from_SOP}) and (\ref{eq_new_SOP_from_GammaR}), respectively.

\begin{lemma}\label{lemma1}
 The PDF of ${{\Gamma}_{\text{SE}}^{(n^*)}}(x)$ is given as 
\begin{align}
\label{PDF_SE_MIN-ES}
&f_{{\Gamma}_{\text{SE}}^{(n^*)}}(x)=\sum_{\mathbf{i}\in\mathcal{M}^{(N)}}\binom{N}{i_1,\ldots,i_K}\nn\\
&\times\Bigg(\prod_{k=1}^{K}\Bigg(\frac{\prod_{m=1}^{K}\lambda_{\text{E}}^{(m)}}{\lambda_{\text{E}}^{(k)}\prod\limits_{\substack{j=1\\j\ne k}}^K(\lambda_{\text{E}}^{(j)}-\lambda_{\text{E}}^{(k)})}\Bigg)^{i_k}\Bigg)\tilde{\lambda}_{\text{E}}^{(k)} e^{-\tilde{\lambda}_{\text{E}}^{(k)}x},
\end{align}
where   $\mathcal{M}^{(\omega)}$ is the set of integer vectors $[i_{1},\ldots,i_{K}]$ containing $K$ elements such that $i_{k}\in\{0 ,\ldots, \omega\}$ for each $k\in\{1 ,\ldots, K\}$ and $\sum_{k=1}^{K}i_k=\omega$ where $\omega$ is any whole number, i.e., $\omega\in\{0,1,2,\ldots\}$, $\binom{N}{i_1,\ldots,i_K}=\frac{N!}{i_1!i_2!\ldots i_K}$ is a multinomial coefficient, and $\tilde{\lambda}_{\text{E}}^{(k)}=\sum_{k=1}^{K}i_k  \lambda_{\text{E}}^{(k)} $.
\end{lemma}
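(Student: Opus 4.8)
The plan is to compute the PDF of the minimum of $N$ i.i.d.\ eavesdropping SNRs $\Gamma_{\text{SE}}^{(n)}$, each distributed according to \eqref{PDF_SE}--\eqref{CDF_SE}, and then expand the resulting power of a finite sum via the multinomial theorem. First I would use the standard order-statistics fact that for i.i.d.\ RVs, $F_{\Gamma_{\text{SE}}^{(n^*)}}(x) = 1 - \bigl(1 - F_{\Gamma_{\text{SE}}^{(n)}}(x)\bigr)^N$ and hence $f_{\Gamma_{\text{SE}}^{(n^*)}}(x) = N\bigl(1 - F_{\Gamma_{\text{SE}}^{(n)}}(x)\bigr)^{N-1} f_{\Gamma_{\text{SE}}^{(n)}}(x)$. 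Substituting the single-link CDF from \eqref{CDF_SE} gives
\begin{align}
1 - F_{\Gamma_{\text{SE}}^{(n)}}(x) = \sum_{k=1}^{K} c_k\, e^{-\lambda_{\text{E}}^{(k)} x}, \nn
\end{align}
where $c_k = \frac{\prod_{m=1}^{K}\lambda_{\text{E}}^{(m)}}{\lambda_{\text{E}}^{(k)}\prod_{j\ne k}(\lambda_{\text{E}}^{(j)}-\lambda_{\text{E}}^{(k)})}$ are exactly the coefficients appearing in \eqref{PDF_SE_MIN-ES}.

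Next I would raise this sum to the power $N$ using the multinomial theorem: $\bigl(\sum_{k=1}^K c_k e^{-\lambda_{\text{E}}^{(k)} x}\bigr)^{N} = \sum_{\mathbf{i}\in\mathcal{M}^{(N)}} \binom{N}{i_1,\ldots,i_K} \prod_{k=1}^K c_k^{\,i_k}\, e^{-(\sum_k i_k \lambda_{\text{E}}^{(k)})x}$, which is where the index set $\mathcal{M}^{(N)}$, the multinomial coefficient, and the aggregate rate $\tilde\lambda_{\text{E}}^{(k)}=\sum_k i_k\lambda_{\text{E}}^{(k)}$ enter. To get the PDF of the minimum directly (rather than raising to power $N-1$ and then multiplying by $f_{\Gamma_{\text{SE}}^{(n)}}(x)$, which would reintroduce a product of sums), the cleanest route is to differentiate the CDF: $f_{\Gamma_{\text{SE}}^{(n^*)}}(x) = -\frac{d}{dx}\bigl(1-F_{\Gamma_{\text{SE}}^{(n)}}(x)\bigr)^N = -\frac{d}{dx}\sum_{\mathbf{i}} \binom{N}{i_1,\ldots,i_K}\prod_k c_k^{i_k} e^{-\tilde\lambda_{\text{E}}^{(k)}x}$. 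Differentiating term-by-term pulls down a factor $\tilde\lambda_{\text{E}}^{(k)}$ and flips the sign, yielding precisely \eqref{PDF_SE_MIN-ES}. (One should note that the $\mathbf{i}=\mathbf{0}$ term is constant in $x$ and so vanishes under differentiation, consistent with $\tilde\lambda_{\text{E}}^{(k)}=0$ in that case killing the term.)

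I expect the main obstacle to be purely bookkeeping rather than conceptual: one must be careful that the notation $\tilde\lambda_{\text{E}}^{(k)}=\sum_{k=1}^K i_k\lambda_{\text{E}}^{(k)}$ as written is really an abuse of notation for a quantity depending on the whole vector $\mathbf{i}$ (the superscript $(k)$ is a spurious dummy clash with the summation index), and that $\prod_k c_k^{i_k}$ must be tracked consistently through the differentiation. It is also worth a sentence to justify that the representation $1-F_{\Gamma_{\text{SE}}^{(n)}}(x)=\sum_k c_k e^{-\lambda_{\text{E}}^{(k)}x}$ is valid, i.e.\ that $\sum_k c_k = 1$ (evaluate at $x=0$), which is exactly the partial-fraction identity underlying \eqref{CDF_SE} and holds under the stated assumption $\lambda_{\text{E}}^{(j)}\ne\lambda_{\text{E}}^{(k)}$ for $j\ne k$. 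Aside from these care points, the argument is a short, self-contained chain: order statistics of the minimum, partial-fraction form of the survival function, multinomial expansion, term-by-term differentiation.
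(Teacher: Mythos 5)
Your proposal is correct and follows essentially the same route as the paper: the CDF of the minimum of the $N$ i.i.d.\ eavesdropping SNRs is written as $1-\bigl(1-F_{\Gamma_{\text{SE}}^{(n)}}(x)\bigr)^N$, the survival function (a sum of $K$ exponentials) is raised to the power $N$ via the multinomial theorem, and the PDF is obtained by term-by-term differentiation of the resulting CDF. Your additional remarks (the notational clash in $\tilde\lambda_{\text{E}}^{(k)}$, the vanishing $\mathbf{i}=\mathbf{0}$ term, and checking $\sum_k c_k=1$) are careful bookkeeping points beyond what the paper states, but do not change the argument.
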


\begin{proof}
The PDF $f_{{\Gamma}_{\text{SE}}^{(n^*)}}(x)$ is derived from the CDF of $\Gamma_{\text{SE}}^{(n^*)}$. The CDF $F_{{\Gamma}_{\text{SE}}^{(n^*)}}(x)$ is derived using the CDF of $\Gamma_{\text{SE}}^{(n)}(x)$ in (\ref{CDF_SE}) and utilizing the multinomial theorem as
\begin{align}\label{cdf_se_minimunm}
&F_{{\Gamma}_{\text{SE}}^{(n^*)}}(x)=\mathbb{P}\Big[\min_{{n \in \{1,2,\ldots, N\}}}\{\Gamma_{\text{SE}}^{(n)}\}\le x\Big]\nn\\
&=1-\Big(1-\mathbb{P}\Big[\Gamma_{\text{SE}}^{(n)}\le x\Big]\Big)^N\nn\\
&=1-\Bigg(\sum_{k=1}^{K}\frac{\Big(\prod_{m=1}^{K}\lambda_{\text{E}}^{(m)}\Big)e^{-\lambda_{\text{E}}^{(k)} x}}{\lambda_{\text{E}}^{(k)}\prod\limits_{\substack{j=1\\j\ne k}}^K(\lambda_{\text{E}}^{(j)}-\lambda_{\text{E}}^{(k)})}\Bigg)^{N} \nn \\
&=1-\sum_{{\mathbf{i}\in\mathcal{M}^{(N)}}}\binom{N}{i_1,\ldots,i_K}\nn\\
&\times\Bigg(\prod_{k=1}^{K}\Bigg(\frac{\prod_{m=1}^{K}\lambda_{\text{E}}^{(m)}}{\lambda_{\text{E}}^{(k)}\prod\limits_{\substack{j=1\\j\ne k}}^K(\lambda_{\text{E}}^{(j)}-\lambda_{\text{E}}^{(k)})}\Bigg)^{i_k}\Bigg)e^{-\tilde{\lambda}_{\text{E}}^{(k)}x},
\end{align}
By differentiating (\ref{cdf_se_minimunm}), the PDF $f_{{\Gamma}_{\text{SE}}^{(n^*)}}(x)$ is achieved as shown in (\ref{PDF_SE_MIN-ES}). 
\end{proof}

\subsection{Minimal eavesdropping selection in the BKA case (MIN-ES-BKA)}\label{sub_sec_min_es_bka}
In this section, the transmitter with minimum channel power gain among $\text{S} _n$-$\text{E}$ links is selected among all links for which backhaul links are active. Assuming $\mathcal{S} \subseteq \{1,2,\ldots, N\}$ is the subset of transmitters for which the backhaul links are active, the end-to-end eavesdropping SNR including backhaul link reliability factor  is 
\begin{align} \label{eq_gain_min_with}
\hat{\Gamma}_{\text{SE}}^{(n^* )}=\min_{n\in\mathcal{S}}\{\hat{\Gamma}_{\text{SE}}^{(n)}\}.
\end{align} 
Therefore, $F_{\Gamma_{\text{R}}^{(n^*)}}(y)$ is evaluated following (\ref{gamma_r_min_without_0}) 
where $f_{\hat{\Gamma}_{\text{SE}}^{(n^*)}}(x)$ is given in (\ref{eq_gain_min_with}).
To evaluate the $F_{\Gamma_{\text{R}}^{(n^*)}}(y)$, the distribution of $\hat{\Gamma}_{\text{SE}}^{(n^* )}$ is required as in the previous section and is evaluated in (\ref{PDF_SE_MINES_with}) in \textit{ Lemma \ref{lemma2}}. With the help of $f_{\hat{\Gamma}_{\text{SE}}^{(n^*)}}(x)$ from (\ref{PDF_SE_MINES_with}) and following the similar steps that were carried out for (\ref{gamma_r_min_without}), $F_{\Gamma_{\text{R}}^{(n^*)}}(y)$ is evaluated as
\begin{align}\label{gamma_r_min_with}
&F_{\Gamma_{\text{R}}^{(n^*)}}(y)=
(1-s)^N\int_{0}^{\infty}\delta(x)dx\nn\\
&+\int_{0}^{\infty}\Big(1-e^{-\lambda_{\text{D}}(\rho x+y-1)}\Big)\sum_{n=1}^{N}\sum_{\mathbf{i}\in\mathcal{M}^{(n)}}\binom{N}{n}(1-s)^{N-n}\nn\\
&\times s^n\binom{n}{i_1,\ldots,i_K}\Bigg(\prod_{k=1}^{K}\Bigg(\frac{\prod_{m=1}^{K}\lambda_{\text{E}}^{(m)}}{\lambda_{\text{E}}^{(k)}\prod\limits_{\substack{j=1\\j\ne k}}^K(\lambda_{\text{E}}^{(j)}-\lambda_{\text{E}}^{(k)})}\Bigg)^{i_k}\Bigg)\nn\\
&\times\tilde{\lambda}_{\text{E}}^{(k)}e^{-\tilde{\lambda}_{\text{E}}^{(k)}x}dx \nn \\
&=1-\sum_{n=1}^{N}\sum_{\mathbf{i}\in\mathcal{M}^{(n)}}\binom{N}{n}\binom{n}{i_1,\ldots,i_K}(1-s)^{N-n}s^n\nn\\
&\times\Bigg(\prod_{k=1}^{K}\Bigg(\frac{\prod_{m=1}^{K}\lambda_{\text{E}}^{(m)}}{\lambda_{\text{E}}^{(k)}\prod\limits_{\substack{j=1\\j\ne k}}^K(\lambda_{\text{E}}^{(j)}-\lambda_{\text{E}}^{(k)})}\Bigg)^{i_k}\Bigg)\frac{\tilde{\lambda}_{\text{E}}^{(k)}e^{-\lambda_{\text{D}}(y-1)}}{y\lambda_{\text{D}}+\tilde{\lambda}_{\text{E}}^{(k)}}.
\end{align}

\textit{Remark}: The NZSR and SOP can be evaluated directly from $F_{\Gamma_{\text{R}}^{(n^*)}}(y)$ in (\ref{gamma_r_min_with}) following (\ref{eq_NZSR_from_SOP}) and (\ref{eq_new_SOP_from_GammaR}), respectively.

\begin{lemma}\label{lemma2}
The distribution of $\hat{\Gamma}_{\text{SE}}^{(n^* )}$ is  
\begin{align}\label{PDF_SE_MINES_with}
&f_{\hat{\Gamma}_{\text{SE}}^{(n^*)}}(x)=(1-s)^N\delta(x)+\sum_{n=1}^{N}\sum_{\mathbf{i}\in\mathcal{M}^{(n)}}\binom{N}{n}\binom{n}{i_1,\ldots,i_K}\nn\\
&\times (1-s)^{N-n}s^n\Bigg(\prod_{k=1}^{K}\Bigg(\frac{\prod_{m=1}^{K}\lambda_{\text{E}}^{(m)}}{\lambda_{\text{E}}^{(k)}\prod\limits_{\substack{j=1\\j\ne k}}^K(\lambda_{\text{E}}^{(j)}-\lambda_{\text{E}}^{(k)})}\Bigg)^{i_k}\Bigg)\nn\\
&\times\tilde{\lambda}_{\text{E}}^{(k)}e^{-\tilde{\lambda}_{\text{E}}^{(k)}x}.
\end{align}
\end{lemma}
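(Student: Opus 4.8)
The plan is to condition on the random set $\mathcal{S}$ of transmitters whose backhaul links are active and then apply the law of total probability over its cardinality. Since the backhaul indicators $\mathbb{I}_1,\ldots,\mathbb{I}_N$ are i.i.d.\ Bernoulli with $\mathbb{P}[\mathbb{I}_n=1]=s$, the size $|\mathcal{S}|$ is Binomial$(N,s)$, so $\mathbb{P}[|\mathcal{S}|=n]=\binom{N}{n}s^{n}(1-s)^{N-n}$ for $n\in\{0,1,\ldots,N\}$. The event $\{|\mathcal{S}|=0\}$ (no active backhaul, hence no transmission) yields the atom at the origin and will contribute the $(1-s)^{N}\delta(x)$ term. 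For $n\ge 1$, conditioned on $|\mathcal{S}|=n$, the quantity $\hat{\Gamma}_{\text{SE}}^{(n^*)}$ in (\ref{eq_gain_min_with}) is the minimum of $n$ i.i.d.\ copies of $\Gamma_{\text{SE}}^{(\cdot)}$, using the assumption that the $\text{S}_n$-$\text{E}_k$ links are i.i.d.\ across $n$ for each fixed $k$. Here it is essential that the minimum is taken over the \emph{active} index set $\mathcal{S}$ rather than over all $N$ of the mixture variables in (\ref{mixture_distribution}); the latter would collapse to zero whenever any single backhaul fails.

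Next I would compute the conditional CDF. For $n\ge 1$, independence and identical distribution give $F_{\hat{\Gamma}_{\text{SE}}^{(n^*)}\mid |\mathcal{S}|=n}(x)=1-\big(1-F_{\Gamma_{\text{SE}}^{(n)}}(x)\big)^{n}$, and from (\ref{CDF_SE}) we have $1-F_{\Gamma_{\text{SE}}^{(n)}}(x)=\sum_{k=1}^{K}a_k\,e^{-\lambda_{\text{E}}^{(k)}x}$ with $a_k=\frac{\prod_{m=1}^{K}\lambda_{\text{E}}^{(m)}}{\lambda_{\text{E}}^{(k)}\prod_{j\ne k}(\lambda_{\text{E}}^{(j)}-\lambda_{\text{E}}^{(k)})}$. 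Expanding the $n$-th power by the multinomial theorem — the very step already carried out with exponent $N$ in (\ref{cdf_se_minimunm}) — gives $\big(1-F_{\Gamma_{\text{SE}}^{(n)}}(x)\big)^{n}=\sum_{\mathbf{i}\in\mathcal{M}^{(n)}}\binom{n}{i_1,\ldots,i_K}\big(\prod_{k=1}^{K}a_k^{i_k}\big)e^{-\tilde{\lambda}_{\text{E}}^{(k)}x}$, where $\tilde{\lambda}_{\text{E}}^{(k)}$ is as defined in Lemma \ref{lemma1} and is strictly positive because $\sum_{k}i_k=n\ge 1$. Differentiating in $x$ yields the conditional PDF $f_{\hat{\Gamma}_{\text{SE}}^{(n^*)}\mid |\mathcal{S}|=n}(x)=\sum_{\mathbf{i}\in\mathcal{M}^{(n)}}\binom{n}{i_1,\ldots,i_K}\big(\prod_{k=1}^{K}a_k^{i_k}\big)\tilde{\lambda}_{\text{E}}^{(k)}e^{-\tilde{\lambda}_{\text{E}}^{(k)}x}$.

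Finally I would assemble the unconditional PDF via $f_{\hat{\Gamma}_{\text{SE}}^{(n^*)}}(x)=\mathbb{P}[|\mathcal{S}|=0]\,\delta(x)+\sum_{n=1}^{N}\mathbb{P}[|\mathcal{S}|=n]\,f_{\hat{\Gamma}_{\text{SE}}^{(n^*)}\mid |\mathcal{S}|=n}(x)$; inserting the binomial weights $\binom{N}{n}s^{n}(1-s)^{N-n}$ and the expression $a_k$ for each $k$ reproduces exactly (\ref{PDF_SE_MINES_with}). There is no deep obstacle in this argument — it reduces to a multinomial expansion combined with a binomial averaging over the active-set size. The only points that require care are (i) identifying the minimum as being over $\mathcal{S}$, not over all $N$ mixture variables, and (ii) the combinatorial bookkeeping, in particular isolating the $n=0$ contribution cleanly as the Dirac mass $(1-s)^{N}\delta(x)$ and verifying that every exponent $\tilde{\lambda}_{\text{E}}^{(k)}$ appearing for $n\ge 1$ is nonzero so that term-by-term differentiation of the surviving sum is valid.
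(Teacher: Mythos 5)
Your proposal is correct and follows essentially the same route as the paper's proof: both condition on the number of active backhaul links with binomial weights $\binom{N}{n}s^n(1-s)^{N-n}$, isolate the all-inactive event as the atom $(1-s)^N\delta(x)$, expand $\bigl(1-F_{\Gamma_{\text{SE}}^{(n)}}(x)\bigr)^n$ by the multinomial theorem as in (\ref{cdf_se_minimunm}), and differentiate the resulting CDF. Your explicit remarks that the minimum must be taken over the active set $\mathcal{S}$ rather than over the $N$ mixture variables, and that $\tilde{\lambda}_{\text{E}}^{(k)}>0$ when $n\ge 1$, mirror the caveat the paper itself notes after Lemma \ref{lemma2}.
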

\begin{proof}
The PDF $f_{\hat{\Gamma}_{\text{SE}}^{(n^*)}}(x)$ is derived from the CDF of $\hat{\Gamma}_{\text{SE}}^{(n^*)}$.
To find the CDF of $\hat{\Gamma}_{\text{SE}}^{(n^*)}$, we consider evaluating CDF in two mutually exclusive events wherein i) all the backhaul links are inactive and ii) at least one of the links is active. In the first event, as the probability that all links are inactive is $(1-s)^N$, the CDF in this event is 
\begin{align}\label{CDF_SE_MIN-ES_with_mix1}
F_{\hat{\Gamma}_{\text{SE}}^{(n^*)}}(x)&=\big(\mathbb{P}\left[\mathbb{I}_n=0\right]\big)^N u(x)= (1-s)^N u(x),
\end{align}
where $u(x)$ is the unit step function. 
In the second event, the CDF is directly obtained as 
\begin{align}
\label{CDF_SE_MIN-ES_with_mix2}
&F_{\hat{\Gamma}_{\text{SE}}^{(n^*)}}(x)=\sum_{n=1}^{N}\binom{N}{n}(1-s)^{N-n}s^n\Big(\mathbb{P}\Big[\min_{n\in\mathcal{S}}\{{\Gamma}_{\text{SE}}^{(n)}\}\le x\Big]\Big) \nn\\
&=\sum_{n=1}^{N}\binom{N}{n}(1-s)^{N-n}s^n\Big(1-\Big(1-\mathbb{P}\Big[{\Gamma}_{\text{SE}}^{(n)}\le x\Big]\Big)^n\Big) \nn\\
&=\sum_{n=1}^{N}\binom{N}{n}(1-s)^{N-n}s^n\nn\\
&\times\Bigg(1-\Bigg(\sum_{k=1}^{K}\frac{\Big(\prod_{m=1}^{K}\lambda_{\text{E}}^{(m)}\Big) e^{-\lambda_{\text{E}}^{(k)} x}}{\lambda_{\text{E}}^{(k)}\prod\limits_{\substack{j=1\\j\ne k}}^K(\lambda_{\text{E}}^{(j)}-\lambda_{\text{E}}^{(k)})}\Bigg)^n\Bigg).
\end{align}
Finally, the CDF  $F_{\hat{\Gamma}_{\text{SE}}^{(n^*)}}(x)$ is obtained by adding (\ref{CDF_SE_MIN-ES_with_mix1}) and (\ref{CDF_SE_MIN-ES_with_mix2})  as both the cases arise independently. Similar to (\ref{cdf_se_minimunm}), we apply the multinomial theorem in (\ref{CDF_SE_MIN-ES_with_mix2}) as well. The PDF $f_{\hat{\Gamma}_{\text{SE}}^{(n^*)}}(x)$ in (\ref{PDF_SE_MINES_with}) is obtained by differentiating (\ref{CDF_SE_MIN-ES_with_mix2}).  
\end{proof}

We here note that it is challenging to find the minimum among RVs in (\ref{eq_gain_min_with}) for which backhaul links are active as these RVs have a mixture distribution of Bernoulli and exponential. As Bernoulli RV always takes the minimum value of zero, one might end up getting zero in (\ref{eq_gain_min_with}) if proper modeling is not implemented for active backhaul links.  We note here that the procedure adopted in Lemma \ref{lemma2} for incorporating the backhaul reliability factor in the eavesdropping link SNR and finding the minimum of RVs which includes mixture distribution is novel.

 \subsection{Traditional transmitter selection in the BKU case (TTS-BKU) }\label{sub_sec_tts_bku}In this section, the transmitter is selected based on the maximum channel power gain among the $\text{S}_n$-$\text{D}$ links without knowing which backhaul links are active. In this case, the selected transmitter provides the following destination SNR if the corresponding backhaul link is active 
\begin{align}\label{eq_gain_max_without}
{\Gamma}_{\text{SD}}^{(n^*)}=  \max_{n\in \{1,2\ldots, N\}}\{ {\Gamma}_{\text{SD}}^{(n)}\}.
\end{align}
To derive the  CDF of $\Gamma_{\text{R}}^{(n^*)}$, the backhaul link reliability factor  is included in $\hat{\Gamma}_{\text{SD}}^{(n)}$ and hence, the  CDF of $\Gamma_{\text{R}}^{(n^*)}$ is evaluated following (\ref{gamma_r}) as
\begin{align}\label{non_zero_tts}
F_{\Gamma_{\text{R}}^{(n^*)}}(y)&=\int_{0}^{\infty} F_{\hat{\Gamma}^{(n^*)}_{\text{SD}}}(y(x+1)-1) f_ {{\Gamma}_{\text{SE}}^{(n)}}(x) dx.   
\end{align}
The CDF $F_{\hat{\Gamma}^{(n^*)}_{\text{SD}}}(x)$ above is expressed following the mixture distribution from (\ref{mixture_distribution}) as
\begin{align}\label{CDF_sd_TTS_without}
F_{\hat{\Gamma}^{(n^*)}_{\text{SD}}}(x)&=(1-s)u(x)+sF_{\Gamma^{(n^*)}_{\text{SD}}}(x),
\end{align} 
where $F_{{\Gamma}^{(n^*)}_{\text{SD}}}(x)$ is derived in (\ref{CDF_TTS_without_max}) in Lemma \ref{lemma3}. 
Next,  $F_{\Gamma_{\text{R}}^{(n^*)}}(y)$ in (\ref{non_zero_tts}) is evaluated using $F_{\hat{\Gamma}_{\text{SD}}^{(n^*)}}(x)$ from (\ref{CDF_sd_TTS_without}) and 
 $f_{{\Gamma}_{\text{SE}}^{(n)}}(x)$ from (\ref{PDF_SE}) as
\begin{align}\label{gamma_r_tts_without}
&F_{\Gamma_{\text{R}}^{(n^*)}}(y)=\int_{0}^{\infty}\Big((1-s)u(x)+s\Big(1-\sum_{n=1}^{N}\binom{N}{n}(-1)^{n+1}\nn\\
&\times e^{-n\lambda_{\text{D}}(y(x+1)-1)}\Big)\Big)\sum_{k=1}^{K}\frac{\prod_{m=1}^{K}\lambda_{\text{E}}^{(m)}}{\prod\limits_{\substack{j=1\\j\ne k}}^K(\lambda_{\text{E}}^{(j)}-\lambda_{\text{E}}^{(k)})}e^{-\lambda_{\text{E}}^{(k)} x}dx\nn\\
&=1-\sum_{n=1}^{N}\sum_{k=1}^{K}\binom{N}{n}\frac{(-1)^{n+1}s\Big(\prod_{m=1}^{K}\lambda_{\text{E}}^{(m)}\Big)e^{-\lambda_{\text{D}} (y-1)}}{(ny\lambda_{\text{D}}+ \lambda_{\text{E}}^{(k)})\prod\limits_{\substack{j=1\\j\ne k}}^K(\lambda_{\text{E}}^{(j)}-\lambda_{\text{E}}^{(k)})}.
\end{align}

\textit{Remark}: The NZSR and SOP can be evaluated directly from $F_{\Gamma_{\text{R}}^{(n^*)}}(y)$ in (\ref{gamma_r_tts_without}) following (\ref{eq_NZSR_from_SOP}) and (\ref{eq_new_SOP_from_GammaR}), respectively.

\begin{lemma}\label{lemma3}
The CDF of ${\Gamma^{(n^*)}_{\text{SD}}}(x)$ without including the backhaul link reliability factor is expressed as
\begin{align}\label{CDF_TTS_without_max}
F_{\Gamma^{(n^*)}_{\text{SD}}}(x)&=1-\sum_{n=1}^{N}\binom{N}{n}(-1)^{n+1}e^{-n\lambda_{\text{D}} x}.
\end{align} 
\end{lemma}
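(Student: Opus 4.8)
The statement to prove is Lemma~\ref{lemma3}, which gives the CDF of $\Gamma^{(n^*)}_{\text{SD}} = \max_{n\in\{1,\ldots,N\}}\{\Gamma^{(n)}_{\text{SD}}\}$ without the backhaul reliability factor. The plan is to use the standard order-statistics argument for the maximum of i.i.d.\ random variables together with the binomial theorem.

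\textbf{Approach.} First I would write the CDF of the maximum as a product of the marginal CDFs, exploiting the fact that the $\Gamma^{(n)}_{\text{SD}}$ are i.i.d.\ with common CDF given in \eqref{cdf_SD}. That is,
\begin{align}
F_{\Gamma^{(n^*)}_{\text{SD}}}(x) &= \mathbb{P}\Big[\max_{n\in\{1,\ldots,N\}}\{\Gamma^{(n)}_{\text{SD}}\}\le x\Big] \nn\\
&= \prod_{n=1}^{N}\mathbb{P}\big[\Gamma^{(n)}_{\text{SD}}\le x\big] = \big(1-e^{-\lambda_{\text{D}}x}\big)^{N}.
\end{align}
Then I would expand $(1-e^{-\lambda_{\text{D}}x})^{N}$ via the binomial theorem as $\sum_{n=0}^{N}\binom{N}{n}(-1)^{n}e^{-n\lambda_{\text{D}}x}$, split off the $n=0$ term (which equals $1$), and rewrite the remaining sum with a sign flip to obtain $1-\sum_{n=1}^{N}\binom{N}{n}(-1)^{n+1}e^{-n\lambda_{\text{D}}x}$, which is exactly \eqref{CDF_TTS_without_max}.

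\textbf{Key steps in order.} (i) Invoke independence to turn the CDF of the max into a product; (ii) invoke identical distribution to write the product as the $N$-th power of the single-link CDF from \eqref{cdf_SD}; (iii) apply the binomial expansion; (iv) isolate and simplify the $n=0$ term and re-index/flip signs to match the stated form. Differentiation is not needed here since the lemma only claims the CDF (the PDF of the max is used implicitly elsewhere but is not part of this statement).

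\textbf{Main obstacle.} Honestly, there is no serious obstacle: the result is the textbook CDF of the maximum of $N$ i.i.d.\ exponentials. The only thing to be a little careful about is bookkeeping in the binomial expansion---correctly handling the $n=0$ term and the $(-1)^{n}$ versus $(-1)^{n+1}$ sign so that the final expression matches \eqref{CDF_TTS_without_max} verbatim. I would also note in passing that this exponential-sum form is what makes the subsequent integral in \eqref{gamma_r_tts_without} tractable, since each term integrates against the eavesdropper PDF \eqref{PDF_SE} to a simple rational-times-exponential expression.
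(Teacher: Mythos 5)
Your proposal is correct and follows essentially the same route as the paper: express the CDF of the maximum as the product of the i.i.d.\ marginal CDFs from \eqref{cdf_SD}, yielding $(1-e^{-\lambda_{\text{D}}x})^N$, and then expand via the binomial theorem to reach \eqref{CDF_TTS_without_max}. The sign and $n=0$ bookkeeping you describe is exactly what the paper's binomial expansion does implicitly.
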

\begin{proof}
$F_{\Gamma^{(n^*)}_{\text{SD}}}(x)$ is evaluated using the definition of CDF  as
\begin{align}\label{lemma3proof}
F_{\Gamma^{(n^*)}_{\text{SD}}}(x)&=\mathbb{P}\Big[\max_{{n \in \{1,2,\ldots, N\}}}\{{\Gamma}_{\text{SD}}^{(n)}\}\le x\Big]\nn\\
&=\prod_{n=1}^{N}\mathbb{P}\Big[\Gamma_{\text{SD}}^{(n)}\}\le x\Big].
\end{align}
Using (\ref{cdf_SD}), (\ref{CDF_TTS_without_max}) provides  (\ref{lemma3proof}). 
\end{proof}

\subsection{Traditional transmitter selection in the BKA case (TTS-BKA)}\label{sub_sec_tts_bka}
In this section, the transmitter is selected based on the maximum channel power gain among the $\text{S}_n$-$\text{D}$ links whose backhaul links are active. In this case, the destination SNR including the backhaul link reliability factor becomes
\begin{align} 
\label{eq_gain_max_with}
\hat{\Gamma}_{\text{SD}}^{(n^*)}=\max_{n\in\mathcal{S}}\{\hat{\Gamma}_{\text{SD}}^{(n)}\}.
\end{align}
Therefore, the  CDF of $\Gamma_{\text{R}}^{(n^*)}$ is evaluated following (\ref{non_zero_0}) as
\begin{align}\label{non_zero_tts_with}
&F_{\Gamma_{\text{R}}^{(n^*)}}(y)=
\int_{0}^{\infty} F_{\hat{\Gamma}_{\text{SD}}^{(n^*)}}(y(x+1)-1)f_{{\Gamma}_{\text{SE}}^{(n)}}(x)dx,
\end{align}
where $F_{\hat{\Gamma}_{\text{SD}}^{(n^*)}}(x)$ is evaluated in (\ref{CDF_SD_TTS_with}) in Lemma \ref{lemma4}.
Next, $F_{\Gamma_{\text{R}}^{(n^*)}}(y)$ is obtained using $F_{\hat{\Gamma}_{\text{SD}}^{(n^*)}}(x)$ from (\ref{CDF_SD_TTS_with}) and $f_{{\Gamma}_{\text{SE}}^{(n)}}(x)$ from (\ref{PDF_SE}) in (\ref{non_zero_tts_with}) as
\begin{align}\label{gamma_r_tts_with}
&F_{\Gamma_{\text{R}}^{(n^*)}}(y)
=\int_{0}^{\infty}\Big(1-\sum_{n=1}^{N}\binom{N}{n}(1-s)^{N-n}u(x)s^{n}\sum_{q=1}^{n}\nn\\
&\binom{n}{q}(-1)^{q+1}e^{-q\lambda_{\text{D}} (y(x+1)-1)}\Big)\sum_{k=1}^{K}\frac{\prod_{m=1}^{K}\lambda_{\text{E}}^{(m)}e^{-\lambda_{\text{E}}^{(k)} x}dx }{\prod\limits_{\substack{j=1\\j\ne k}}^K(\lambda_{\text{E}}^{(j)}-\lambda_{\text{E}}^{(k)})} \nn \\
&=1-\sum_{n=1}^{N}\sum_{k=1}^{K}\sum_{q=1}^{n}\binom{N}{n}\binom{n}{q}(1-s)^{N-n}s^{n}(-1)^{q+1}\nn\\
&\times\frac{\Big(\prod_{m=1}^{K}\lambda_{\text{E}}^{(m)}\Big)e^{-\lambda_{\text{D}} (y-1)}}{(qy\lambda_{\text{D}} +\lambda_{E})\prod\limits_{\substack{j=1\\j\ne k}}^K(\lambda_{\text{E}}^{(j)}-\lambda_{\text{E}}^{(k)})}.
\end{align}

\textit{Remark}: The NZSR and the SOP can be evaluated directly from $F_{\Gamma_{\text{R}}^{(n^*)}}(y)$ in (\ref{gamma_r_tts_with}) following (\ref{eq_NZSR_from_SOP}) and (\ref{eq_new_SOP_from_GammaR}), respectively.

\begin{lemma}\label{lemma4}
The CDF of the SNR $\hat{\Gamma}_{\text{SD}}^{(n^*)}$ when the backhaul link activity knowledge is available, is 
\begin{align}\label{CDF_SD_TTS_with}
F_{\hat{\Gamma}^{(n^*)}_{\text{SD}}}(x)
&=1-\sum_{n=1}^{N}\sum_{q=1}^{n}\binom{N}{n}\binom{n}{q}(1-s)^{N-n}s^{n}(-1)^{q+1}\nn\\
&\times e^{-q\lambda_{\text{D}} x}.
\end{align}
\end{lemma}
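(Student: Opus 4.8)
The plan is to derive $F_{\hat{\Gamma}^{(n^*)}_{\text{SD}}}(x)$ by conditioning on the random subset $\mathcal{S}\subseteq\{1,\ldots,N\}$ of transmitters whose backhaul links are active, exactly mirroring the argument used for the eavesdropping side in \textit{Lemma \ref{lemma2}}. First I would note that when $|\mathcal{S}|=n$, the relevant SNR is $\max_{m\in\mathcal{S}}\{\Gamma^{(m)}_{\text{SD}}\}$, a maximum of $n$ i.i.d.\ exponential RVs with parameter $\lambda_{\text{D}}$, so its CDF is $\big(1-e^{-\lambda_{\text{D}}x}\big)^n$ for $x\ge 0$. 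The event $|\mathcal{S}|=n$ occurs with probability $\binom{N}{n}(1-s)^{N-n}s^{n}$ by the binomial law on the independent Bernoulli indicators $\mathbb{I}_1,\ldots,\mathbb{I}_N$. When $n=0$ (all backhaul links down), the contribution is $(1-s)^N u(x)$, which is the $n=0$ term of the same binomial sum since $\big(1-e^{-\lambda_{\text{D}}x}\big)^0=1=u(x)$ on the support.

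Next I would assemble the total CDF by the law of total probability over the value of $|\mathcal{S}|$:
\begin{align}\label{eq_plan_total}
F_{\hat{\Gamma}^{(n^*)}_{\text{SD}}}(x)=\sum_{n=0}^{N}\binom{N}{n}(1-s)^{N-n}s^{n}\big(1-e^{-\lambda_{\text{D}}x}\big)^{n}.
\end{align}
The final step is to expand $\big(1-e^{-\lambda_{\text{D}}x}\big)^{n}$ by the binomial theorem, writing it as $\sum_{q=0}^{n}\binom{n}{q}(-1)^q e^{-q\lambda_{\text{D}}x}$, split off the $q=0$ term, and use $\sum_{n=0}^{N}\binom{N}{n}(1-s)^{N-n}s^n=1$ to collapse the constant part to $1$. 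The surviving terms, indexed by $n\in\{1,\ldots,N\}$ and $q\in\{1,\ldots,n\}$, carry the factor $\binom{N}{n}\binom{n}{q}(1-s)^{N-n}s^n(-1)^{q}$; pulling out an overall minus sign turns $(-1)^q$ into $-(-1)^{q+1}$, which yields precisely (\ref{CDF_SD_TTS_with}).

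The only mild subtlety, and the one I would be most careful about, is the bookkeeping of the all-inactive case and the fact that a Bernoulli-exponential mixture can collapse to the degenerate value $0$: one must argue (as in the remark following \textit{Lemma \ref{lemma2}}) that the maximum is taken only over links in $\mathcal{S}$, so the zero atoms from inactive links do not pollute the maximum, and the $n=0$ slice is handled separately as the point mass at $0$ encoded by $(1-s)^N u(x)$. Apart from that, everything is an application of the binomial theorem twice and the routine normalization $\sum_n\binom{N}{n}(1-s)^{N-n}s^n=1$, so no genuine obstacle remains.
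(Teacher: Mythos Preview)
Your proposal is correct and arrives at the same double-binomial expansion as the paper, but the initial step differs slightly: the paper observes that for a \emph{maximum} the inactive-link zero atoms are harmless, so it can write $F_{\hat{\Gamma}^{(n^*)}_{\text{SD}}}(x)=\prod_{n=1}^{N}F_{\hat{\Gamma}^{(n)}_{\text{SD}}}(x)=\big((1-s)u(x)+s(1-e^{-\lambda_{\text{D}}x})\big)^{N}$ directly from the mixture CDF~(\ref{CDF_SNR_SD_with}), and then expand twice. You instead condition on $|\mathcal{S}|$ as in \textit{Lemma~\ref{lemma2}}, which is the more cautious route needed for the minimum; here it yields the very same intermediate expression $\sum_{n=0}^{N}\binom{N}{n}(1-s)^{N-n}s^{n}\big(1-e^{-\lambda_{\text{D}}x}\big)^{n}$, after which your second binomial expansion and the collapse of the $q=0$ terms to $1$ are identical to the paper's. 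So the two arguments coincide from that point on; the paper's shortcut simply exploits that $\max$ is insensitive to the zero atoms, while your conditioning argument would also work verbatim for statistics (like $\min$) where that shortcut fails.
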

\begin{proof}
The CDF $F_{\hat{\Gamma}_{\text{SD}}^{(n^*)}}$ is  evaluated as
\begin{align}\label{lemma4proof}
F_{\hat{\Gamma}^{(n^*)}_{\text{SD}}}(x)
&=\mathbb{P}\Big[\max_{{n \in \{1,2\ldots, N\}}}\{\hat{\Gamma}_{\text{SD}}^{(n)}\}\le x\Big]\nn\\
&=\prod_{n=1}^{N}\Big(\mathbb{P}\Big[\hat{\Gamma}_{\text{SD}}^{(n)}\le x\Big] \Big)
\nn \\
&=\Big((1-s)u(x)+s(1-e^{-\lambda_{\text{D}} x})\Big)^N.
\end{align}
In (\ref{lemma4proof}), we have used the CDF of ${\hat{\Gamma}^{(n)}_{\text{SD}}}(x)$ following the mixture distribution in  (\ref{mixture_distribution}) as 
\begin{align}
\label{CDF_SNR_SD_with}    
F_{\hat{\Gamma}^{(n)}_{\text{SD}}}(x)&=(1-s)u(x)+s(1-e^{-\lambda_{\text{D}} x}).
\end{align}
Finally, (\ref{CDF_SD_TTS_with}) is evaluated from (\ref{lemma4proof}) using two successive binomial expansions.
\end{proof}


\color{black}
\subsection{Optimal transmitter selection in  the BKU case (OTS-BKU)}\label{sub_sec_ots_bku}
In this section, the transmitter is selected for which the instantaneous achievable secrecy rate  $C^{(n)}_S$ in (\ref{secrecy_capacity}) among all $n\in\{1,\ldots, N\}$ is maximum  without considering which backhaul links are active as no knowledge
of backhaul link activity is available. This alternatively means that $\Gamma_{\text{R}}^{(n)}$ among all $n\in\{1,\ldots, N\}$ is maximum.
Hence, the selected transmitter is denoted by 
\begin{align}\label{_max_OTS_without}
n^*=\arg \max_{n\in \{1,2\ldots, N\}}\{ {\Gamma_{\text{R}}^{(n)}}\}.
\end{align}
The selected transmitter has uncertainty whether it is active or not, hence, the CDF of $\Gamma_{\text{R}}^{(n^*)}$ is modeled using the mixture distribution method described in (\ref{mixture_distribution}) as
\begin{align}\label{gamma_r_ots_without}
F_{\Gamma_{\text{R}}^{(n^*)}}(y)&= (1-s)\times1+s \mathbb P\Big[\max_{n\in \{1,2\ldots, N\}}\{\Gamma_{\text{R}}^{(n)}\} \le y\Big].
\end{align}
In (\ref{gamma_r_ots_without}), the second term in the summation contains the CDF of $\Gamma_{\text{R}}^{(n^*)}$ when the corresponding backhaul link is active, which is evaluated following (\ref{gamma_r}) with the help of $F_{{\Gamma}_{\text{SD}}^{(n)}}(x)$ from (\ref{cdf_SD}) and $f_{{\Gamma}_{\text{SE}}^{(n)}}(x)$ from (\ref{PDF_SE}) as
\begin{align}\label{NZ_OS_without}
&\mathbb P\Big[\max_{n\in \{1,2\ldots, N\}}\{ \Gamma_{\text{R}}^{(n)}\} \le y\Big]\nn\\
&=\Bigg(\int_{0}^{\infty}F_{{\Gamma}_{\text{SD}}^{(n)}}(y(x+1)-1)f_{{\Gamma}_{\text{SE}}^{(n)}}(x)dx \Bigg)^N\nn\\
&=1-\sum_{n=1}^{N}\binom{N}{n}(-1)^{n+1}\nn\\
&\times\Bigg(\sum_{k=1}^{K}\frac{\Big(\prod_{m=1}^{K}\lambda_{\text{E}}^{(m)}\Big)e^{-\lambda_{\text{D}}(y-1)}}{(\lambda_{\text{D}}y+\lambda_{\text{E}}^{(k)})\prod\limits_{\substack{j=1\\j\ne k}}^K(\lambda_{\text{E}}^{(j)}-\lambda_{\text{E}}^{(k)})} \Bigg)^n.
\end{align}
\textit{Remark}: The NZSR and the SOP can be evaluated directly from $F_{\Gamma_{\text{R}}^{(n^*)}}(y)$ in (\ref{gamma_r_ots_without}) following (\ref{eq_NZSR_from_SOP}) and (\ref{eq_new_SOP_from_GammaR}), respectively.

\subsection{Optimal transmitter selection in the BKA case (OTS-BKA)}\label{sub_sec_ots_bka}
The availability of backhaul link activity knowledge permits us to find the maximum $C^{(n)}_S$ among all links with active backhaul links, which is expressed as
\begin{align}\label{max_OTS_with}
n^* =\arg \max_{n\in \mathcal S} \{ \Gamma_{\text{R}}^{(n)}\}.
\end{align}
Consequently, the  CDF of $\Gamma_{\text{R}}^{(n^*)}$ is evaluated following (\ref{gamma_r}) including the backhaul link reliability factor  in $\hat{\Gamma}_{\text{SD}}^{(n)}$ as 
\begin{align}
\label{NZ_OS_with_0}
F_{\Gamma_{\text{R}}^{(n^*)}}(y)
=\Bigg(\int_{0}^{\infty}F_{\hat{\Gamma}_{\text{SD}}^{(n)}}(y(x+1)-1)f_{{\Gamma}_{\text{SE}}^{(n)}}(x)dx \Bigg)^N.
\end{align}
Therefore, $F_{\Gamma_{\text{R}}^{(n^*)}}(x)$ is evaluated using $F_{\hat{\Gamma}^{(n)}_{\text{SD}}}(x)$ from  (\ref{CDF_SNR_SD_with}) and $f_{{\Gamma}_{\text{SE}}^{(n)}}(x)$ from (\ref{PDF_SE}) in (\ref{NZ_OS_with_0})  as 
\begin{align}\label{gamma_r_ots_with}
F_{\Gamma_{\text{R}}^{(n^*)}}(y)
&=1-\sum_{n=1}^{N}\binom{N}{n}(-1)^{n+1}s^n\nn\\
&\times\Bigg(\sum_{k=1}^{K}\frac{\Big(\prod_{m=1}^{K}\lambda_{\text{E}}^{(m)}\Big)e^{-\lambda_{\text{D}}(y-1)}}{(\lambda_{\text{D}}y+\lambda_{\text{E}}^{(k)})\prod\limits_{\substack{j=1\\j\ne k}}^K(\lambda_{\text{E}}^{(j)}-\lambda_{\text{E}}^{(k)})} \Bigg)^n.
\end{align}

\textit{Remark}: The NZSR and the SOP can be evaluated directly from $F_{\Gamma_{\text{R}}^{(n^*)}}(y)$ in (\ref{gamma_r_ots_with}) following (\ref{eq_NZSR_from_SOP}) and (\ref{eq_new_SOP_from_GammaR}), respectively.

We note from  (\ref{gamma_r_min_without}),  (\ref{gamma_r_min_with}), (\ref{gamma_r_tts_without}), (\ref{gamma_r_tts_with}), (\ref{NZ_OS_without}), and (\ref{gamma_r_ots_with})  that the number of summations and multiplication terms significantly increase with the increase in $N$ and $K$, making the SOP and NZSR evaluation very computationally intensive.
To reduce the computational complexity and to find better insights, we provide the asymptotic analysis of the SOP for the BKU and BKA cases in Section \ref{sec_asymptotic_analysis_sop_unreliable} and the diversity order analysis for perfect backhaul links in Section \ref{section_diversity_order}.

\section{Ergodic Secrecy Rate}
\label{section_ergodic_secrecy_rate}
In this section, the ESR of each transmitter selection scheme is derived in both BKU and BKA cases. The ESR is evaluated using (\ref{ergodic_secrecy_rate}) with the help of the distribution of $\Gamma_{\text{R}}^{(n^*)}$ already derived in Section   \ref{sec_distribution_snr_ratio} for each selection scheme and backhaul link activity knowledge.

\subsection{Minimal eavesdropping selection in the BKU case (MIN-ES-BKU)}
In this section, the transmitter selection is performed according to (\ref{MIN-ES_k}) as already mentioned in Section \ref{sub_sec_min_es_bku} for the MIN-ES-BKU scheme. The ESR is consequently obtained from (\ref{ergodic_secrecy_rate}) with the help of $F_{\Gamma_{\text{R}}^{(n^*)}}(x)$ derived for the same scheme in (\ref{gamma_r_min_without}) as
\begin{align}\label{erg_min_es_without}
&\mathcal{C}_{\text{erg}}=\frac{s}{\ln(2)}\int_{1}^{\infty} \sum_{\mathbf{i}\in\mathcal{M}^{(N)}}\binom{N}{i_1,\ldots,i_K}\nn\\
&\times\Bigg(\prod_{k=1}^{K}\Bigg(\frac{\prod_{m=1}^{K}\lambda_{\text{E}}^{(m)}}{\lambda_{\text{E}}^{(k)}\prod\limits_{\substack{j=1\\j\ne k}}^K(\lambda_{\text{E}}^{(j)}-\lambda_{\text{E}}^{(k)})}\Bigg)^{i_k}\Bigg)\frac{\tilde{\lambda}_{\text{E}}^{(k)}e^{-\lambda_{\text{D}} (x-1)}}{x(x\lambda_{\text{D}} +\tilde{\lambda}_{\text{E}}^{(k)})}dx.
\end{align}
The integral in (\ref{erg_min_es_without}) is evaluated by first adopting partial fraction and then directly using  in \cite[eq.
(3.352.2)]{table_of_integrals} as
\begin{align}\label{erg_min_es_without_final}
&\mathcal{C}_{\text{erg}}=
\frac{s}{\ln(2)}\sum_{\mathbf{i}\in\mathcal{M}^{(N)}}\binom{N}{i_1,\ldots,i_K}\nn\\
&\times\Bigg(\prod_{k=1}^{K}\Bigg(\frac{\prod_{m=1}^{K}\lambda_{\text{E}}^{(m)}}{\lambda_{\text{E}}^{(k)}\prod\limits_{\substack{j=1\\j\ne k}}^K(\lambda_{\text{E}}^{(j)}-\lambda_{\text{E}}^{(k)})}\Bigg)^{i_k}\Bigg)\frac{e^{\lambda_{\text{D}}}\tilde{\lambda}_{\text{E}}^{(k)}}{\lambda_{\text{D}}}\nn\\
&\times \Bigg(\int_{1}^{\infty}\frac{\lambda_{\text{D}}e^{-\lambda_{\text{D}} x}}{\tilde{\lambda}_{\text{E}}^{(k)}x}dx-\int_{1}^{\infty}\frac{\lambda_{\text{D}}e^{-\lambda_{\text{D}} x}}{\tilde{\lambda}_{\text{E}}^{(k)}\Big(x+\frac{\tilde{\lambda}_{\text{E}}^{(k)}}{\lambda_{\text{D}}}\Big)}dx\Bigg)\nn\\
&=\frac{s}{\ln(2)}\sum_{\mathbf{i}\in\mathcal{M}^{(N)}}\binom{N}{i_1,\ldots,i_K}\nn\\
&\times\Bigg(\prod_{k=1}^{K}\Bigg(\frac{\prod_{m=1}^{K}\lambda_{\text{E}}^{(m)}}{\lambda_{\text{E}}^{(k)}\prod\limits_{\substack{j=1\\j\ne k}}^K(\lambda_{\text{E}}^{(j)}-\lambda_{\text{E}}^{(k)})}\Bigg)^{i_k}\Bigg)\Big(e^{\tilde{\lambda}_{\text{E}}^{(k)}+\lambda_{\text{D}}}\nn\\
&\times\Ei(-(\tilde{\lambda}_{\text{E}}^{(k)}+\lambda_{\text{D}}))-e^{\lambda_{\text{D}}}\Ei(-\lambda_{\text{D}})\Big),
\end{align}
where 
$\Ei(x)=\int_{-\infty}^{x}\frac{e^{t}}{t}dt$ is the exponential integral.



\subsection{Minimal eavesdropping selection in the BKA case (MIN-ES-BKA)}
In this section, the transmitter selection is performed according to (\ref{eq_gain_min_with}) as already mentioned in Section \ref{sub_sec_min_es_bka} for the MIN-ES-BKA scheme.  Following the same procedure as in (\ref{erg_min_es_without_final}), the ESR is evaluated by utilizing  $F_{\Gamma_{\text{R}}^{(n^*)}}(x)$ from (\ref{gamma_r_min_with}) in (\ref{ergodic_secrecy_rate}) as
\begin{align}\label{erg_min_es_with}
&\mathcal{C}_{\text{erg}}=\frac{1}{\ln(2)}\sum_{n=1}^{N}\sum_{\mathbf{i}\in\mathcal{M}^{(n)}}\binom{N}{n}\binom{n}{i_1,\ldots,i_K}\nn\\
&\times\Bigg(\prod_{k=1}^{K}\Bigg(\frac{\prod_{m=1}^{K}\lambda_{\text{E}}^{(m)}}{\lambda_{\text{E}}^{(k)}\prod\limits_{\substack{j=1\\j\ne k}}^K(\lambda_{\text{E}}^{(j)}-\lambda_{\text{E}}^{(k)})}\Bigg)^{i_k}\Bigg)(1-s)^{N-n}s^n\nn\\
&\times\Big(e^{\tilde{\lambda}_{\text{E}}^{(k)}+\lambda_{\text{D}}}\Ei(-(\tilde{\lambda}_{\text{E}}^{(k)}+\lambda_{\text{D}}))-e^{\lambda_{\text{D}}}\Ei(-\lambda_{\text{D}})\Big).
\end{align}


\subsection{Traditional transmitter selection in the BKU case (TTS-BKU) }
In this section, the transmitter selection is performed according to (\ref{eq_gain_max_without}) as already mentioned in Section \ref{sub_sec_tts_bku} for the TTS-BKU case. The ESR in this case is derived by substituting for $F_{\Gamma_{\text{R}}^{(n^*)}}(x)$ from (\ref{gamma_r_tts_without}) in (\ref{ergodic_secrecy_rate}) as  
\begin{align}\label{ERG_TTS_without}
\mathcal{C}_{\text{erg}}&=\frac{1}{\ln(2)}\int_{1}^{\infty}\frac{1}{x}\Bigg(\sum_{n=1}^{N}\sum_{k=1}^{K}\binom{N}{n}(-1)^{n+1}s\nn\\
&\times\frac{\Big(\prod_{m=1}^{K}\lambda_{\text{E}}^{(m)}\Big)e^{-\lambda_{\text{D}} (x-1)}}{(\lambda_{\text{D}} n x+\lambda_{\text{E}}^{(k)})\prod\limits_{\substack{j=1\\j\ne k}}^K(\lambda_{\text{E}}^{(j)}-\lambda_{\text{E}}^{(k)})}\Bigg)dx.
\end{align}
The integral in (\ref{ERG_TTS_without}) is evaluated by using the partial fraction method and following the steps taken for achieving (\ref{erg_min_es_without_final}) as
\begin{align}\label{eq_ERG_TTS_without}
&\mathcal{C}_{\text{erg}}=\frac{s}{\ln(2)}\sum_{n=1}^{N}\sum_{k=1}^{K}\binom{N}{n}\frac{(-1)^{n+1}\Big(\prod_{m=1}^{K}\lambda_{\text{E}}^{(m)}\Big)}{\lambda_{\text{E}}^{(k)}\prod\limits_{\substack{j=1\\j\ne k}}^K(\lambda_{\text{E}}^{(j)}-\lambda_{\text{E}}^{(k)})}\nn\\
&\times\Big(-e^{n\lambda_{\text{D}}}\Ei (-n\lambda_{\text{D}} )+e^{(\lambda_{\text{E}}^{(k)}+n\lambda_{\text{D}})}\Ei(-(\lambda_{\text{E}}^{(k)}+n\lambda_{\text{D}}))\Big).
\end{align}

\subsection{Traditional transmitter selection in the BKA case (TTS-BKA) }
In this section, the transmitter is selected according to (\ref{eq_gain_max_with}) as already mentioned in Section \ref{sub_sec_tts_bku} for the TTS-BKA case. Thus, the ESR for this case is evaluated by utilizing $F_{{\Gamma}_{\text{R}}^{(n^*)}}(x)$ from (\ref{gamma_r_tts_with}) in (\ref{ergodic_secrecy_rate}) as
\begin{align}\label{ERG_TTS_with}
&\mathcal{C}_{\text{erg}}=\int_{1}^{\infty}\sum_{n=1}^{N}\sum_{k=1}^{K}\sum_{q=1}^{n}\binom{N}{n}\binom{n}{q}(1-s)^{N-n}s^{n}(-1)^{q+1}\nn\\
&\times\frac{\Big(\prod_{m=1}^{K}\lambda_{\text{E}}^{(m)}\Big)e^{-\lambda_{\text{D}} (x-1)}}{x(qx\lambda_{\text{D}} +\lambda_{E})\prod\limits_{\substack{j=1\\j\ne k}}^K(\lambda_{\text{E}}^{(j)}-\lambda_{\text{E}}^{(k)})}dx\nn\\
&=\frac{1}{\ln(2)}\sum_{n=1}^{N}\sum_{k=1}^{K}\sum_{q=1}^{n}\binom{N}{n}\binom{n}{q}(1-s)^{N-n}s^{n}(-1)^{q+1}\nn\\
&\times\frac{\prod_{m=1}^{K}\lambda_{\text{E}}^{(m)}}{\lambda_{\text{E}}^{(k)}\prod\limits_{\substack{j=1\\j\ne k}}^K(\lambda_{\text{E}}^{(j)}-\lambda_{\text{E}}^{(k)})}\Big(-e^{q\lambda_{\text{D}}}\Ei (-q\lambda_{\text{D}})\nn\\
&+e^{\lb(\lambda_{\text{E}}^{(k)}+q\lambda_{\text{D}}\rb)}\Ei(-\lambda_{\text{E}}^{(k)}-q\lambda_{\text{D}})\Big).
\end{align}
\subsection{Optimal transmitter selection in the BKU case (OTS-BKU)}
In this section, the best transmitter is selected according to (\ref{_max_OTS_without}) as shown in Section \ref{sub_sec_ots_bku} for the OTS-BKU case. By substituting $F_{{\Gamma}_{\text{R}}^{(n^*)}}(x)$ from (\ref{gamma_r_ots_without}) in (\ref{ergodic_secrecy_rate}), the ESR is evaluated as
\begin{align}\label{ERG_OTS_without}
&\mathcal{C}_{\text{erg}}=\frac{1}{\ln(2)}\int_{1}^{\infty}\frac{1}{x}\sum_{n=1}^{N}\sum_{\mathbf{i}\in\mathcal{M}^{(n)}}\binom{N}{n}\binom{n}{i_1,\ldots,i_K}(-1)^{n+1}\nn\\
&\times s\Bigg(\prod_{k=1}^{K}\Bigg(\frac{\prod_{m=1}^{K}\lambda_{\text{E}}^{(m)}}{\lambda_{\text{D}}\prod\limits_{\substack{j=1\\j\ne k}}^K(\lambda_{\text{E}}^{(j)}-\lambda_{\text{E}}^{(k)})}\Bigg)^{i_k}\Bigg)\frac{e^{-\lambda_{\text{D}} (x-1)\tilde{i}_k}dx}{\prod_{k=1}^{K}\Big(x+\frac{\lambda_{\text{E}}^{(k)}}{\lambda_{\text{D}}}\Big)^{i_k}},
\end{align}
where $\tilde{i}_k=\sum_{k=1}^{K}i_k$. The integral is evaluated by using  the partial fraction method and then following the similar steps as were taken for (\ref{erg_min_es_without_final}) as
\begin{align}
&\mathcal{C}_{\text{erg}}
=\frac{1}{\ln(2)}\sum_{n=1}^{N}\sum_{\mathbf{i}\in\mathcal{M}^{(n)}}\binom{N}{n}\binom{n}{i_1,\ldots,i_K}(-1)^{n+1}s\nn\\
&\times\Bigg(\prod_{k=1}^{K}\Bigg(\frac{\prod_{m=1}^{K}\lambda_{\text{E}}^{(m)}}{\lambda_{\text{D}}\prod\limits_{\substack{j=1\\j\ne k}}^K(\lambda_{\text{E}}^{(j)}-\lambda_{\text{E}}^{(k)})}\Bigg)^{i_k}\Bigg)\Bigg(-\frac{e^{\lambda_{\text{D}} \tilde{i}_k}\Ei{(-\lambda_{\text{D}} \tilde{i}_k)}}{\prod_{k=1}^{K}\Big(\frac{\lambda_{\text{E}}^{(k)}}{\lambda_{\text{D}}}\Big)^{i_k}}\nn\\
&-\sum_{k=1}^{K}\sum_{l_k=1}^{i_k}{A_k^{(i_k)}(\lambda_{\text{D}} \tilde{i}_k)^{l_k-i_k}e^{((\lambda_{\text{D}}+\lambda_{\text{E}}^{(k)}) \tilde{i}_k)} }\nn\\
&\times\Gamma\Big[l_k-i_k,(-(\lambda_{\text{D}}+\lambda_{\text{E}}^{(k)}) \tilde{i}_k)\Big]\Bigg),
\end{align}
where $\Gamma[m,x]=\int_{x}^{\infty}t^{m-1}e^{-t}dt$ is the incomplete gamma function with $m$ being a positive integer and $A_k^{(i_k)}$ is the partial fraction coefficient easily obtained using the standard partial fraction method.

\subsection{Optimal transmitter selection in the BKA case (OTS-BKA)}
In this section, the best transmitter is selected according to (\ref{max_OTS_with}) as shown in Section \ref{sub_sec_ots_bka} for the OTS-BKA case. The ESR is derived by following the same  procedure as in  (\ref{ERG_OTS_without})  by utilizing (\ref{gamma_r_ots_with}) in (\ref{ergodic_secrecy_rate}) as
\begin{align}\label{ERG_OTS_with}
&\mathcal{C}_{\text{erg}}=\frac{1}{\ln(2)}\sum_{n=1}^{N}\sum_{\mathbf{i}\in\mathcal{M}^{(n)}}\binom{N}{n}\binom{n}{i_1,\ldots,i_K}(-1)^{n+1}s^n\nn\\
&\times\Bigg(\prod_{k=1}^{K}\Bigg(\frac{\prod_{m=1}^{K}\lambda_{\text{E}}^{(m)}}{\lambda_{\text{D}}\prod\limits_{\substack{j=1\\j\ne k}}^K(\lambda_{\text{E}}^{(j)}-\lambda_{\text{E}}^{(k)})}\Bigg)^{i_k}\Bigg)\Bigg(-\frac{e^{\lambda_{\text{D}} \tilde{i}_k}\Ei{(-\lambda_{\text{D}} \tilde{i}_k)}}{\prod_{k=1}^{K}\Big(\frac{\lambda_{\text{E}}^{(k)}}{\lambda_{\text{D}}}\Big)^{i_k}}\nn\\
&-\sum_{k=1}^{K}\sum_{l_k=1}^{i_k}{A_k^{(i_k)}(\lambda_{\text{D}} \tilde{i}_k)^{l_k-i_k}e^{((\lambda_{\text{D}}+\lambda_{\text{E}}^{(k)}) \tilde{i}_k)} }\nn\\
&\times\Gamma\Big[l_k-i_k,(-(\lambda_{\text{D}}+\lambda_{\text{E}}^{(k)}) \tilde{i}_k)\Big]\Bigg).
\end{align}

It is difficult to understand from the exact ESR expressions in (\ref{erg_min_es_without_final}), (\ref{erg_min_es_with}), (\ref{ERG_TTS_without}), and (\ref{ERG_TTS_with}) that how it depends on  $s$, $1/\lambda_{\text{D}}$, $1/\lambda_{\text{E}}^{(k)}$ for all $k\in\{1,\ldots,K\}$, $K$, and $N$. 
To get insights into how the backhaul link reliability factor and other system parameters affect the NZSR and SOP, we provide a simplified asymptotic ESR expression in Section \ref{section_asymptotic_ESR}.

\textit{Remark}: The methodology followed in Section \ref{sec_distribution_snr_ratio} and  Section \ref{section_ergodic_secrecy_rate} is uniform and can be used to find any secrecy performance (SOP, NZSR, and  ESR) for different transmitter selection schemes while incorporating the backhaul uncertainty.

\section{Asymptotic Analysis of SOP}\label{sec_asymptotic_analysis_sop_unreliable}
In this section, the asymptotic behavior of the system is analyzed when the SNR of the $\text{S}_n$-$\text{D}$ link is increased asymptotically compared to the SNR of the eavesdroppers' links. By assuming $1/\lambda_{\text{D}}  \rightarrow \infty$ for a given $1/\lambda_{\text{E}}^{(k)}$ for each $n\in\{1,\dots, K\}$ and $k \in \{1,\ldots, N\}$, the asymptotic analysis of the SOP for the selection schemes in BKU and BKA cases are carried out. For the asymptotic analysis, we first approximate  $F_{\Gamma_{\text{R}}^{(n^*)}}(y)$ already derived in Section \ref{sec_distribution_snr_ratio} for each selection scheme under the condition of unreliable backhaul links when $1/\lambda_{\text{D}}  \rightarrow \infty$ and use the same in the SOP expression of (\ref{eq_new_SOP_from_GammaR}). The asymptotic analysis provides insight into the impact of unreliable backhaul connections on the system's performance in the high-SNR regime. 

\subsection{Minimal eavesdropping selection (MIN-ES) }
In this section, we evaluate the asymptotic SOP of the MIN-ES scheme for both the BKU and BKA cases. To derive the asymptotic SOP for the MIN-ES-BKU scheme, we first find an approximation of $F_{\Gamma_{\text{R}}^{(n^*)}}(y)$ in (\ref{gamma_r_min_without}) as $\textnormal{exp}(-y\lambda_{\text{D}})$ tends to unity when $1/\lambda_{\text{D}}  \rightarrow \infty$ as 
\begin{align}\label{NZ_MIN-ES_without_asym}
\mathcal{P}_{\text{out}}^\infty&=1-s\sum_{i_1+\ldots+i_K=n}\binom{n}{i_1,\ldots,i_K}\nn\\
&\times\Bigg(\prod_{k=1}^{K}\Bigg(\frac{\prod_{m=1}^{K}\lambda_{\text{E}}^{(m)}}{\lambda_{\text{E}}^{(k)}\prod\limits_{\substack{j=1\\j\ne k}}^K(\lambda_{\text{E}}^{(j)}-\lambda_{\text{E}}^{(k)})}\Bigg)^{i_k}\Bigg)
= (1-s).
\end{align}
The multiplier of $s$ in the second term of the above equation becomes unity. Then, using (\ref{NZ_MIN-ES_without_asym}) in the SOP expression of (\ref{eq_new_SOP_from_GammaR}), we obtain the asymptotic SOP expression.

Similarly, for the MIN-ES-BKA scheme, $F_{\Gamma_{\text{R}}^{(n^*)}}(y)$ in (\ref{gamma_r_min_with}) is approximated at high-SNR first, and then the asymptotic SOP is evaluated using the SOP expression from (\ref{eq_new_SOP_from_GammaR}) as
\begin{align}\label{NZ_MIN-ES_with_asym}
\mathcal{P}_{\text{out}}^\infty&=1-\sum_{n=1}^{N}\binom{N}{n}(1-s)^{N-n}s^n=(1-s)^N. 
\end{align}

\subsection{Traditional transmitter selection (TTS)}
In this section, the asymptotic SOP of the TTS scheme for both the BKU and BKA cases is evaluated. Applying $1/\lambda_{\text{D}}  \rightarrow \infty$  in (\ref{gamma_r_tts_without}) we approximate $F_{\Gamma_{\text{R}}^{(n^*)}}(y)$ for the TTS-BKU scheme, and using the same in the SOP expression of (\ref{eq_new_SOP_from_GammaR}), the asymptotic SOP is evaluated as $\mathcal{P}_{\text{out}}^\infty= (1-s).$

Similarly, for the TTS-BKA scheme, the asymptotic SOP is evaluated using the approximate  $F_{\Gamma_{\text{R}}^{(n^*)}}(y)$ from (\ref{gamma_r_tts_with}) at high-SNR  and then applying it in (\ref{eq_new_SOP_from_GammaR}) as
\begin{align}\label{NZ_TTS_with_asym}
\mathcal{P}_{\text{out}}^\infty&= 1-\sum_{n=1}^{N}\binom{N}{n}(-1)^{n+1}s^n=(1-s)^N.
\end{align}
\subsection{Optimal transmitter selection (OTS)}
The asymptotic SOP for the OTS-BKU scheme can easily be expressed following the same approach as in (\ref{NZ_MIN-ES_without_asym}) using the approximate $F_{\Gamma_{\text{R}}^{(n^*)}}(y)$ from (\ref{gamma_r_ots_without}) at high-SNR and then using (\ref{eq_new_SOP_from_GammaR}) as 
\begin{align}\label{asym_NZ_OS_without}
&\mathcal{P}_{\text{out}}^\infty= 1- s\sum_{n=1}^{N}\binom{N}{n}(-1)^{n+1}\nn\\
&\times\Bigg(\sum_{k=1}^{K}\frac{\prod_{m=1}^{K}\lambda_{\text{E}}^{(m)}}{\lambda_{\text{E}}^{(k)}\prod\limits_{\substack{j=1\\j\ne k}}^K(\lambda_{\text{E}}^{(j)}-\lambda_{\text{E}}^{(k)})} \Bigg)^n =(1-s).
\end{align}

The asymptotic SOP for the OTS-BKA scheme is derived using the approximated $F_{\Gamma_{\text{R}}^{(n^*)}}(y)$ from (\ref{gamma_r_ots_with}) at high-SNR  and then using (\ref{eq_new_SOP_from_GammaR}) as  $\mathcal{P}_{\text{out}}^\infty=\lb(1-s\rb)^N$.

 The asymptotic analysis of the NZSR can be evaluated directly from the asymptotic SOP derived in this section for each selection scheme and backhaul link activity knowledge case following (\ref{eq_NZSR_from_SOP}).

It is observed from the asymptotes derived for each case in this section  that the SOP at high-SNR saturates to a constant value depending on the backhaul link reliability factor  $s$ and does not depend on the channel quality. We note that the asymptotic SOP is the same, i.e., $(1-s)$, in the BKU case, irrespective of the selection schemes. Similarly, the asymptotic SOP in the BKA case for all the selection schemes is the same, i.e., $(1-s)^N$. We further notice that when backhaul link activity knowledge is unavailable, the asymptotic SOP can not be improved further by increasing the number of transmitters. In contrast, the SOP improvement is possible when the backhaul link activity knowledge is available by increasing the number of transmitters. We also observe that asymptotic SOP requires very few computations compared to the exact SOP expression derived in Section \ref{sec_distribution_snr_ratio}.

\section{Diversity Order Analysis of SOP with  Perfect Backlhaul Links}\label{section_diversity_order}

In this section, we evaluate the diversity order for each transmitter selection scheme when all backhaul links are perfect, i.e., $s=1$. 
Since the asymptotic values depend only on the backhaul link reliability factor,  as shown in the previous section, it is not possible to understand how the SOP reaches the asymptotic values, which is generally answered by the secrecy diversity order analysis. That is why we derive the diversity order here when $s=1$.
The diversity order is  defined as the negative slope of the SOP curve at a high-SNR regime defined in Section \ref{sec_asymptotic_analysis_sop_unreliable} in the logarithm scale of the SNR and is expressed as
\begin{align}\label{diversity_order}
d=-\lim_{1/\lambda_{\text{D}}\rightarrow\infty} \frac{\log\lb(\mathcal{P}_{\text{out}}\rb)}{\log\lb(1/\lambda_{\text{D}}\rb)}.    
\end{align}
Toward the goal, we first obtain an approximate SOP corresponding to the selection schemes at the high-SNR regime and then use the diversity order definition in (\ref{diversity_order}). To obtain an approximate $\mathcal{P}_{\text{out}}$  corresponding to the transmitter selection scheme in the high-SNR regime, we derive  approximate $F_{\Gamma_{\text{R}}^{(n^*)}}(\rho)$ following Section \ref{sec_asymptotic_analysis_sop_unreliable} but in the case of perfect backhaul links using the first-order Taylor series approximation 
$F_{\Gamma^{(n)}_{\text{SD}}}(x)=\lambda_\text{D}x$ from (\ref{cdf_SD}) and $f_ {{\Gamma}_{\text{SE}}^{(n^*)}}(x)$ from Section  \ref{sec_asymptotic_analysis_sop_unreliable} for perfect backhaul links. Under the perfect backhaul links condition, the notion of BKU and BKA does not apply.

\subsection{Minimal eavesdropping selection (MIN-ES)}
To obtain the diversity order, as described above, $F_{\Gamma_{\text{SD}}^{(n)}}$ in (\ref{cdf_SD}) is approximated first using first-order Taylor series approximation as $F_{\Gamma^{(n)}_{\text{SD}}}(x)=\lambda_\text{D}x$ at the high-SNR regime. Next, the approximate SOP is evaluated using $f_{\Gamma_{\text{SE}}^{(n^*)}}(x)$ from  (\ref{PDF_SE_MIN-ES}) under the perfect backhaul links condition and approximate  $F_{\Gamma^{(n)}_{\text{SD}}}(x)$. Following (\ref{gamma_r_min_without_0}), the approximate SOP is
\begin{align}\label{sop_mines_high_snr}
&\mathcal{P}_{\text{out}}=F_{\Gamma_{\text{R}}^{(n^*)}}(\rho)
= \int_{0}^{\infty} F_{{\Gamma}^{(n)}_{\text{SD}}}(\rho(x+1)-1) f_ {{\Gamma}_{\text{SE}}^{(n^*)}}(x) dx\nn\\
&=\lambda_{\text{D}}\sum_{i_1+\ldots+i_K=N}\binom{N}{i_1,\ldots,i_K}\nn\\
&\times\Bigg(\prod_{k=1}^{K}\Bigg(\frac{\prod_{m=1}^{K}\lambda_{\text{E}}^{(m)}}{\lambda_{\text{E}}^{(k)}\prod\limits_{\substack{j=1\\j\ne k}}^K(\lambda_{\text{E}}^{(j)}-\lambda_{\text{E}}^{(k)})}\Bigg)^{i_k}\Bigg)\Bigg(\frac{1}{\tilde{\lambda}_{\text{E}}^{(k)}}+(\rho-1)\Bigg).
\end{align}
{Thereafter, using (\ref{sop_mines_high_snr}) in (\ref{diversity_order}), we derive the diversity order as $d=1$.  The diversity order is one because the transmitter selection is carried out based on the eavesdroppers' links. The minimum SNR at the eavesdropper does not guarantee better destination channel, that is why the diversity does not improve. }

\subsection{Traditional transmitter selection (TTS)}
In this case, $F_{\Gamma_{\text{SD}}^{(n*)}}(x)$ in (\ref{CDF_TTS_without_max}) for the TTS scheme is approximated using first-order Taylor series approximation as $F_{\Gamma^{(n*)}_{\text{SD}}}(x)=(\lambda_\text{D}x)^N$.  Next, to evaluate the approximate SOP, we use  $f_{\Gamma_{\text{SE}}^{(n)}}(x)$ from (\ref{PDF_SE}) and approximate $F_{\Gamma^{(n*)}_{\text{SD}}}(x)$ under the perfect backhaul links condition. We obtain the approximate SOP following (\ref{gamma_r_tts_without}) as 
\begin{align}\label{sop_tts_high_snr}
\mathcal{P}_{\text{out}}
&=\lambda_{\text{D}}^N\sum_{n=0}^{N}\sum_{k=1}^{K}\frac{\rho^n(\rho-1)^{N-n}\Big(\prod_{m=1}^{K}\lambda_{\text{E}}^{(m)}\Big)}{\prod\limits_{\substack{j=1\\j\ne k}}^K(\lambda_{\text{E}}^{(j)}-\lambda_{\text{E}}^{(k)})} \Bigg(\frac{n!}{{\lambda}_{\text{E}_k}^{n+1}}\Bigg).
\end{align}
{Then, the diversity order is obtained using (\ref{sop_tts_high_snr}) in (\ref{diversity_order}) as $d=N$, which is intuitive as the transmitter is selected based on the selection among $N$ number of links to the destination.}

\subsection{Optimal transmitter selection  (OTS)}
Similar to the previous section, $F_{\Gamma_{\text{SD}}^{(n)}}(x)$ in (\ref{cdf_SD}) is approximated first using first-order Taylor series approximation $F_{\Gamma^{(n)}_{\text{SD}}}(x)=\lambda_\text{D}x$ and then the approximate SOP is obtained using approximate  $F_{\Gamma^{(n)}_{\text{SD}}}(x)$ and $f_{\Gamma_{\text{SE}}^{(n)}}$ from (\ref{PDF_SE}) under the perfect backhaul links condition. The approximate SOP following (\ref{gamma_r_ots_without}) is 
\begin{align}\label{sop_ots_high_snr}
\mathcal{P}_{\text{out}}
&=\Bigg(\sum_{k=1}^{K}\frac{\lambda_{\text{D}}\Big(\prod_{m=1}^{K}\lambda_{\text{E}}^{(m)}\Big)\Big(\frac{1}{{\lambda}_{E}^{(k)}}+(\rho-1)\Big)}{\prod\limits_{\substack{j=1\\j\ne k}}^K(\lambda_{\text{E}}^{(j)}-\lambda_{\text{E}}^{(k)})} \Bigg)^N.
\end{align}
{Finally, the application of (\ref{sop_ots_high_snr}) in (\ref{diversity_order}) provides the diversity order as $d=N$, which is the same as the diversity order of the TTS scheme. It can be observed that at high SNRs, both TTS and OTS will relish the same rate of improvement with $N$. However, in OTS, the global CSI is required, whereas the CSI of the eavesdroppers' links is not required in the TTS scheme.}


\section{Asymptotic Analysis of ESR}\label{section_asymptotic_ESR}
In this section, we provide the asymptotic analysis of the ESR for each selection scheme in both the backhaul link activity knowledge cases, BKU and BKA. The definition of the asymptotic analysis is the same as in Section \ref{sec_asymptotic_analysis_sop_unreliable}, i.e., $1/\lambda_{\text{D}}  \rightarrow \infty$ for a given $1/\lambda_{\text{E}}^{(k)}$ for each $n\in\{1,\dots, K\}$ and $k \in \{1,\ldots, N\}$ unless otherwise specified.

\subsection{Minimal eavesdropping selection (MIN-ES)}
The asymptotic ESR for the MIN-ES-BKU scheme is evaluated first. To derive the asymptotic expression, we approximate  exact $\mathcal{C}_{\text{erg}}$ from (\ref{erg_min_es_without}) derived in Section \ref{section_ergodic_secrecy_rate} for the MIN-ES-BKU scheme when $1/\lambda_{\text{D}}  \rightarrow \infty$. As $1/\lambda_{\text{D}}  \rightarrow \infty$,  the ESR in (\ref{erg_min_es_without}) is  approximated by assuming $\lambda_{\text{D}}  \rightarrow 0$ as
\begin{align}\label{erg_asymp_min_es}
\mathcal{C}_{\text{erg}}&
\approx \frac{s}{\ln(2)}\sum_{\mathbf{i}\in\mathcal{M}^{(N)}}\binom{N}{i_1,\ldots,i_K}\nn\\
&\times\Bigg(\prod_{k=1}^{K}\Bigg(\frac{\prod_{m=1}^{K}\lambda_{\text{E}}^{(m)}}{\lambda_{\text{E}}^{(k)}\prod\limits_{\substack{j=1\\j\ne k}}^K(\lambda_{\text{E}}^{(j)}-\lambda_{\text{E}}^{(k)})}\Bigg)^{i_k}\Bigg)\nn\\
&\times\Big(e^{\tilde{\lambda}_{\text{E}}^{(k)}}\Ei(-\tilde{\lambda}_{\text{E}}^{(k)})-\Ei(-\lambda_{\text{D}})\Big).
\end{align}
Next, we use the definition $\Ei(-x)=C+\ln{(x)}+\int_{0}^{x}\frac{e^{-t}-1}{t}dt$, where 
$C=\lim\limits_{m\rightarrow\infty}\lb(-\log(m)+\sum_{k=1}^{m}\frac{1}{k}\rb)$ = $0.5772$ (with $m$ being a positive integer)  is the Euler constant, from \cite[eq. (8.212.1)]{table_of_integrals} to further approximate  (\ref{erg_asymp_min_es}) by neglecting the integral $\int_{0}^{\lambda_\text{D}}\frac{e^{-t}-1}{t}dt$ while $\lambda_\text{D}\rightarrow 0$. Hence, we write the asymptotic ESR as 
\begin{align}
\label{eq_asy_ESR_MINESMKU}
\mathcal{C}_{\text{erg}}^{\infty}
&=\frac{s}{\ln(2)}\Bigg(\ln{\Big(\frac{1}{\lambda_{\text{D}}}\Big)}-\sum_{\mathbf{i}\in\mathcal{M}^{(N)}}\binom{N}{i_1,\ldots,i_K}\nn\\
&\times\Bigg(\prod_{k=1}^{K}\Bigg(\frac{\prod_{m=1}^{K}\lambda_{\text{E}}^{(m)}}{\lambda_{\text{E}}^{(k)}\prod\limits_{\substack{j=1\\j\ne k}}^K(\lambda_{\text{E}}^{(j)}-\lambda_{\text{E}}^{(k)})}\Bigg)^{i_k}\Bigg)\Bigg.\nn\\
&\Bigg.\times\Big(C-e^{\tilde{\lambda}_{\text{E}}^{(k)}}\Ei(-\tilde{\lambda}_{\text{E}}^{(k)})\Big)\Bigg).
\end{align}
The asymptotic ESR in (\ref{eq_asy_ESR_MINESMKU}) can be rewritten as a linear function of $\ln{\lb(1/\lambda_{\text{D}}\rb)}$ to derive insights from the equation as
\begin{align}\label{eq_straight_line_form_ESR_MINESBKU}
\mathcal{C}_{erg}^{\infty}&=S^{\infty}\lb(\ln{(1/\lambda_{\text{D}})}-L^{\infty}\rb),
\end{align}
where $S^{\infty}$ is the high-SNR slope  and $L^{\infty}$ is the power offset parameter \cite{wang_PLS_2014}. The slope shows the rate of change of the ESR with SNR at high SNR, and a system should have a high slope. The offset parameter represents the shift of the asymptotic ESR  from the origin, and it is desirable to have $L^\infty$ as small as possible. From (\ref{eq_asy_ESR_MINESMKU}), $S^{\infty}= \frac{s}{\ln(2)} $
and
\begin{align}\label{l_infinity_min_without}
L^{\infty}
&=C-\sum_{\mathbf{i}\in\mathcal{M}^{(N)}}\binom{N}{i_1,\ldots,i_K}\nn\\
&\times\Bigg(\prod_{k=1}^{K}\Bigg(\frac{\prod_{m=1}^{K}\lambda_{\text{E}}^{(m)}}{\lambda_{\text{E}}^{(k)}\prod\limits_{\substack{j=1\\j\ne k}}^K(\lambda_{\text{E}}^{(j)}-\lambda_{\text{E}}^{(k)})}\Bigg)^{i_k}\Bigg)e^{\tilde{\lambda}_{\text{E}}^{(k)}}\Ei(-\tilde{\lambda}_{\text{E}}^{(k)}).
\end{align}

Following the same procedure as in the MIN-ES-BKU case in obtaining (\ref{eq_straight_line_form_ESR_MINESBKU}) from (\ref{erg_asymp_min_es}), the asymptotic ESR for the MIN-ES-BKA case is evaluated from (\ref{erg_min_es_with}). In this case,  $S^{\infty}=\frac{1-(1-s)^N}{\ln(2)}$
and
\begin{align}\label{l_infinity_min_with}
L^{\infty}&=\frac{1}{1-(1-s)^N}\sum_{n=1}^{N}\sum_{\mathbf{i}\in\mathcal{M}^{(n)}}\binom{N}{n}\binom{n}{i_1,\ldots,i_K}\nn\\
&\times(1-s)^{N-n}s^n\Bigg(\prod_{k=1}^{K}\Bigg(\frac{\prod_{m=1}^{K}\lambda_{\text{E}}^{(m)}}{\prod\limits_{\substack{j=1\\j\ne k}}^K\lambda_{\text{E}}^{(k)}(\lambda_{\text{E}}^{(j)}-\lambda_{\text{E}}^{(k)})}\Bigg)^{i_k}\Bigg)\nn\\
&\times\Big(C-e^{\tilde{\lambda}_{\text{E}}^{(k)}}\Ei(-\tilde{\lambda}_{\text{E}}^{(k)})\Big).
\end{align}

\subsection{Traditional transmitter selection (TTS)}
In the TTS-BKU scheme, the asymptotic ESR is evaluated by following the same steps as taken to arrive at (\ref{eq_straight_line_form_ESR_MINESBKU}) from (\ref{erg_asymp_min_es}). In this case, we also apply $1/\lambda_{\text{D}}  \rightarrow \infty$ or $\lambda_{\text{D}}  \rightarrow 0$  in (\ref{eq_ERG_TTS_without})  to get the approximated ESR as
\begin{align}\label{eq_asy_ESR_TTSBKU}
\mathcal{C}_{\text{erg}}&\approx \frac{s}{\ln(2)}\sum_{n=1}^{N}\sum_{k=1}^{K}\binom{N}{n}(-1)^{n+1}\frac{\lb(\prod_{m=1}^{K}\lambda_{\text{E}}^{(m)}\rb)}{\lambda_{\text{E}}^{(k)}\prod\limits_{\substack{j=1\\j\ne k}}^K(\lambda_{\text{E}}^{(j)}-\lambda_{\text{E}}^{(k)})}\nn\\
&\times \Big(e^{\lambda_{\text{E}}^{(k)}}\Ei(-\lambda_{\text{E}}^{(k)})-\Ei (-n\lambda_{\text{D}} )\Big).
\end{align}
Using the  same  approximation of $\Ei(-n\lambda_{\text{D}})\approx C+\ln{(n\lambda_{\text{D}})}$ when $\lambda_{\text{D}}\rightarrow 0$ as in the MIN-ES scheme, the asymptotic ESR is evaluated as in (\ref{eq_straight_line_form_ESR_MINESBKU})
where $S^{\infty}= \frac{s}{\ln(2)} $
and 
\begin{align}\label{l_inf_tts_without}
L^{\infty}&=C+\sum_{n=1}^{N}\sum_{k=1}^{K}\binom{N}{n}\frac{(-1)^{n+1}\lb(\prod_{m=1}^{K}\lambda_{\text{E}}^{(m)}\rb)}{\lambda_{\text{E}}^{(k)}\prod\limits_{\substack{j=1\\j\ne k}}^K(\lambda_{\text{E}}^{(j)}-\lambda_{\text{E}}^{(k)})}\nn\\
&\times\Big(\ln(n)-e^{\lambda_{\text{E}}^{(k)}}\Ei(-\lambda_{\text{E}}^{(k)})\Big).
\end{align}
Similarly, in the TTS-BKA scheme, the  asymptotic ESR is achieved from (\ref{ERG_TTS_with}) 
where $\frac{1-\lb(1-s\rb)^N}{\ln(2)} $
and
\begin{align}\label{l_inf_tts_with}
&L^{\infty}=\frac{1}{1-\lb(1-s\rb)^N}\sum_{n=1}^{N}\sum_{k=1}^{K}\sum_{q=1}^{n}\binom{N}{n}\binom{n}{q}(1-s)^{N-n}s^{n}\nn\\
&\times\frac{(-1)^{q+1}\lb(\prod_{m=1}^{K}\lambda_{\text{E}}^{(m)}\rb)}{\lambda_{\text{E}}^{(k)}\prod\limits_{\substack{j=1\\j\ne k}}^K(\lambda_{\text{E}}^{(j)}-\lambda_{\text{E}}^{(k)})}\Big(C+\ln(q)-e^{\lambda_{\text{E}}^{(k)}}\Ei(-\lambda_{\text{E}}^{(k)})\Big).
\end{align}

It is observed in this section that the slope of the ESR for each selection scheme depends only on $s$ and increases as $s$ tends to unity. The highest slope $S^\infty=\frac{1}{\ln(2)}$ is achieved when $s=1$. We also observe that the slopes of the MIN-ES and TTS schemes are the same for the BKU  cases. The slopes in MIN-ES-BKA  and TTS-BKA cases  are also the same; however, higher than the slopes of the MIN-ES-BKU  and TTS-BKU cases. 
This shows that the slope depends on the availability of the backhaul link activity knowledge for a given selection scheme. However, for the given backhaul link activity knowledge case, it is the same irrespective of the selection schemes. 
In the BKU cases, the slope only depends on $s$ but not on $N$. In contrast, the slope in the BKA case depends on $s$ and $N$ both.
This suggests that the slope can be improved by increasing $N$ in the BKA case but not in the BKU case. The slope in the BKU and BKA cases for both the MIN-ES and TTS schemes is independent of $K$.

We note from  (\ref{l_infinity_min_without}), (\ref{l_infinity_min_with}), (\ref{l_inf_tts_without}), and (\ref{l_inf_tts_with}) that $L^\infty$ is independent of $1/\lambda_{\text{D}}$.  It increases as $K$ and $1/\lambda_{\text{E}}^{(k)}$ for each $k\in\{1,\ldots,K\}$ increases. Thus, the asymptotic ESR decreases. This is intuitive as the increase in eavesdropping quality degrades the system's secrecy. We also note  that $L^\infty$  decreases as $N$ increases. From (\ref{l_infinity_min_without}) and (\ref{l_inf_tts_without}), we  observe that in both  the BKU cases for MIN-ES and TTS schemes, $L^\infty$ is independent of $s$ and only depends on $N$, $K$, and $1/\lambda_{\text{E}}^{(k)}$. However, in the BKA cases for MIN-ES and TTS schemes in (\ref{l_infinity_min_with}) and (\ref{l_inf_tts_with}), $L^\infty$ decreases as $s$ tends to unity.

\subsection{Optimal transmitter selection (OTS)}
In the OTS scheme, it is not trivial to find the asymptotic ESR by using the same approximation as was used for the MIN-ES and TTS schemes. Therefore, the asymptotic ESR for the $n$-th transmitter is derived assuming ${\Gamma}_{\text{SD}}^{(n)}$ and ${\Gamma}_{\text{SE}}^{(n)}$ both operate in the high-SNR regime and $1/\lambda_{\text{D}} >>1/\lambda_{\text{E}}^{(k)}$ for all $n\in\{1,\ldots,N\}$ and $k\in \{1,\ldots,K\}$. In this case, $\Gamma_{\text{R}}^{(n)}$ is approximated by neglecting unity from both the numerator and the denominator in (\ref{gamma_r}). The CDF of ${\Gamma_{\text{R}}^{(n)}}$  with high-SNR approximation when the backhaul link is active, i.e., $s=1$ is written as 
\begin{align}\label{gamma_r_high_snr}
F_{\Gamma_{\text{R}}^{(n)}}(x)&=\mathbb{P}\Bigg[\frac{{\Gamma}_{\text{SD}}^{(n)}}{{\Gamma}_{\text{SE}}^{(n)}}\le x\Bigg]
= \int_{0}^{\infty} F_{{\Gamma}^{(n)}_{\text{SD}}}(xy) f_{{\Gamma}_{\text{SE}}^{(n)}}(y) dy.
\end{align}
The asymptotic ESR for a selection scheme is then evaluated following (\ref{ergodic_secrecy_rate}) with the help of corresponding $F_{\Gamma_{\text{R}}^{(n^*)}}(x)$ derived from (\ref{gamma_r_high_snr}) in the BKU and BKA cases. In the OTS-BKA scheme, $F_{\Gamma_{\text{R}}^{(n^*)}}(x)$ in the high-SNR regime is derived following Section \ref{sec_distribution_snr_ratio} under the OTS-BKU case  using (\ref{cdf_SD}) and (\ref{PDF_SE}) in  (\ref{gamma_r_high_snr}) as
\begin{align}\label{gamma_ots_high_snr}
&F_{\Gamma_{\text{R}}^{(n^*)}}(x)=(1-s)+s\times\mathbb{P}\Bigg[\max_{n\in \{1,2\ldots, N\}} \Bigg\{\frac{{\Gamma}_{\text{SD}}^{(n)}}{{\Gamma}_{\text{SE}}^{(n)}} \Bigg\}\le x\Bigg] \nn \\
&=1-\sum_{n=1}^{N}\sum_{\mathbf{i}\in\mathcal{M}^{(n)}}\binom{N}{n}(-1)^{n+1}\binom{n}{i_1,\ldots,i_K}\nn\\
&\times\Bigg(\prod_{k=1}^{K}\Bigg(\frac{\prod_{m=1}^{K}\lambda_{\text{E}}^{(m)}}{\lambda_{\text{D}}\prod\limits_{\substack{j=1\\j\ne k}}^K(\lambda_{\text{E}}^{(j)}-\lambda_{\text{E}}^{(k)})}\Bigg)^{i_k}\Bigg)\frac{s}{\prod\limits_{t=1}^{K}\Big(x+\frac{{\lambda}_{E}^{(k)}}{\lambda_{\text{D}}}\Big)^{i_k}}.
\end{align}
The asymptotic ESR is then evaluated by following the same procedure as adopted for the OTS-BKU case in  (\ref{ERG_OTS_without}) from Section \ref{section_ergodic_secrecy_rate}. Therefore, using $F_{\Gamma_{\text{R}}^{(n^*)}}(x)$ from (\ref{gamma_ots_high_snr}) in (\ref{ergodic_secrecy_rate}), the asymptotic ESR is evaluated as 
\begin{align}\label{ERG_OTS_without_high}
&\mathcal{C}_{\text{erg}}^{\infty}
=\frac{1}{\ln(2)}\sum_{n=1}^{N}\sum_{\mathbf{i}\in\mathcal{M}^{(n)}}\binom{N}{n}\binom{n}{i_1,\ldots,i_K}(-1)^{n+1}s\nn\\
&\times\Bigg(\prod_{k=1}^{K}\Bigg(\frac{\prod_{m=1}^{K}\lambda_{\text{E}}^{(m)}}{\lambda_{\text{D}}\prod\limits_{\substack{j=1\\j\ne k}}^K(\lambda_{\text{E}}^{(j)}-\lambda_{\text{E}}^{(k)})}\Bigg)^{i_k}\Bigg)\nn\\
&\times\sum_{k=1}^{K}\Bigg(\frac{\ln{\Big(\frac{{\lambda}_{E}^{(k)}}{\lambda_{\text{D}}}\Big)}}{\Big(\frac{{\lambda}_{E}^{(k)}}{\lambda_{\text{D}}}\Big)^{i_k}\prod\limits_{\substack{j=1\\j\ne k}}^{K}\Big(\frac{{\lambda}_{E}^{(k)}}{\lambda_{\text{D}}}-\frac{{\lambda}_{E}^{(j)}}{\lambda_{\text{D}}}\Big)^{i_ji_k}}+\sum_{l_k=1}^{i_k-1}A_k^{(i_k)}\nn\\
&\times\frac{\Big(1+\frac{{\lambda}_{E}^{(k)}}{\lambda_{\text{D}}}\Big)^{l_k-i_k} }{(i_k-l_k)}\Bigg).
\end{align}
In the OTS-BKA scheme, $F_{\Gamma_{\text{R}}^{(n^*)}}(x)$ is  evaluated with the help of (\ref{cdf_SD}) and (\ref{PDF_SE}) in (\ref{gamma_r_high_snr}) in the high-SNR regime following the same procedure to derive (\ref{ERG_OTS_with}) in Section \ref{section_ergodic_secrecy_rate} for the OTS-BKU scheme. Using $F_{\Gamma_{\text{R}}^{(n^*)}}(x)$ in (\ref{ergodic_secrecy_rate}), the asymptotic ESR is written as 
\begin{align}\label{erg_asym_ots_with}
&\mathcal{C}_{\text{erg}}^{\infty}=\frac{1}{\ln(2)}\sum_{n=1}^{N}\sum_{\mathbf{i}\in\mathcal{M}^{(n)}}\sum_{k=1}^{K}\binom{N}{n}\binom{n}{i_1,\ldots,i_K}(-1)^{n+1}s^n\nn\\
&\times\sum_{k=1}^{K}\Bigg(\frac{\ln{\Big(\frac{{\lambda}_{E}^{(k)}}{\lambda_{\text{D}}}\Big)}}{\Big(\frac{{\lambda}_{E}^{(k)}}{\lambda_{\text{D}}}\Big)^{i_k}\prod\limits_{\substack{j=1\\j\ne k}}^{K}\Big(\frac{{\lambda}_{E}^{(k)}}{\lambda_{\text{D}}}-\frac{{\lambda}_{E}^{(j)}}{\lambda_{\text{D}}}\Big)^{i_ji_k}}+\sum_{l_k=1}^{i_k-1}A_k^{(i_k)}\nn\\
&\times\frac{\Big(1+\frac{{\lambda}_{E}^{(k)}}{\lambda_{\text{D}}}\Big)^{l_k-i_k} }{(i_k-l_k)}\Bigg).   
\end{align}
We notice from (\ref{ERG_OTS_without_high}) and (\ref{erg_asym_ots_with}) that the asymptotic ESR in both the BKU and BKA cases depends on the ratio of destination link SNR to eavesdropping link SNR. Thus, as this ratio improves, the asymptotic ESR also improves.

It is difficult to represent (\ref{erg_asym_ots_with}) in the form of the asymptotic straight line as shown in (\ref{eq_straight_line_form_ESR_MINESBKU}). Therefore we show it for $K=1$. In this case, the asymptotic ESR of the OTS-BKU scheme simplifies to
\begin{align}\label{erg_asym_ots_k1}
\mathcal{C}_{\text{erg}}^{\infty}
&=\frac{1}{\ln(2)}\sum_{n=1}^{N}\binom{N}{n}(-1)^{n+1}s\lb(\log\lb(\frac{\lambda_{E}^{(1)}}{\lambda_{\text{D}}}\rb)-H_{n-1}\rb),  
\end{align}
where $H_{n-1}$ denotes the $(n-1)$-th harmonic number. Thus, we find $S^{\infty}=\frac{s}{\ln(2)}$
and
\begin{align}\label{l_inf_ots_without}
L^{\infty}
&=\log\Big(\frac{1}{\lambda_{E}^{(1)}}\Big)+\sum_{n=1}^{N}\binom{N}{n}{(-1)^{n+1}}{{H}_{n-1}}.
\end{align}
Similarly, for the OTS-BKA case when $K=1$, we can show $S^{\infty}=\frac{1-(1-s)^N}{\ln(2)}$
and
\begin{align}\label{l_inf_ots_with}
L^{\infty}
&=\log\Big(\frac{1}{\lambda_{E}^{(1)}}\Big)+\sum_{n=1}^{N}\binom{N}{n}\frac{{(-1)^{n+1}s^n}}{1-(1-s)^N}{H_{n-1}}.
\end{align}
We note here that the slope and the offset parameter
for the BKA case for $K=1$ matches with the result shown in \cite{chinmoy_letter2021}.
We observe that the slope of the OTS-BKU for $K=1$
is the same as the slope of MIN-ES-BKU and TTS-BKU cases. The slope of the OTS-BKA case for $K=1$ is also the same as the slope of MIN-ES-BKA and TTS-BKA cases. Thus, irrespective of the selection schemes, the slope is the same for the BKU case, and it is also the same for the BKA case. 
However, $L^\infty$ is different for each selection scheme and in the BKU and BKA cases. Similar to the observation in the MIN-ES and TTS  schemes, $L^\infty$ in the OTS scheme does not depend on $s$ and $1/\lambda_{\text{D}}$ for the BKU cases and  depends on $s$, $N$, $K$, and $1/\lambda_{\text{E}}^{(k)}$ for the BKA cases. 
{For $K=1$, $L^\infty$ in the OTS scheme in (\ref{l_inf_ots_without}) and (\ref{l_inf_ots_with}) is smaller than $L^\infty$ of the MIN-ES and TTS scheme in (\ref{l_infinity_min_without}), (\ref{l_infinity_min_with}), (\ref{l_inf_tts_without}), and (\ref{l_inf_tts_with}). This is why OTS scheme performs the best.}

The asymptotic ESR expression with the help of the slope and offset clearly shows how it depends on  $s$, $1/\lambda_{\text{D}}$, $1/\lambda_{\text{E}}^{(k)}$ for all $k\in\{1,\ldots,K\}$, $K$, and $N$
which was difficult to understand from the exact ESR expressions.
Further, it is to be noted that the asymptotic ESR depends on parameters $s$, $1/\lambda_{\text{D}}$, $1/\lambda_{\text{E}}^{(k)}$ for all $k\in\{1,\ldots,K\}$, $K$, and $N$, in contrast, the asymptote of the SOP and NZSR depends only on $s$ for the BKU case and only on $s$ and $N$ for the BKA case.

\section{Numerical Results} \label{section_numerical_results}
This section describes the numerical along with simulated results.  Throughout this section, we assume that the threshold secrecy rate is set as $R_{th}=1$ bits per channel use (bpcu), {$K=3$, and the eavesdroppers' SNRs are set at $1/\lambda_{\text{E}}^{(1)}=6$ dB,  $1/\lambda_{\text{E}}^{(2)}=9$ dB, $1/\lambda_{\text{E}}^{(3)}=13$ dB  for each $n\in\{1,\ldots, N\}$}, 
unless stated otherwise.
Results are plotted for the SOP  
and ESR for each selection scheme in BKU (designated by black color) and BKA (designated by red color) cases. The NZSR is not shown as it can be easily obtained from the SOP and the observations are similar to that of the SOP.
The simulation is denoted by `$\times$', and the horizontal solid line denotes the asymptote from Fig \ref{fig_sop_s} to Fig \ref{fig_erg_nk}, and the slanted straight lines denote the asymptote in Fig \ref{fig_erg_asym}. 
The validity of our analysis is evident from the figures, as the numerical and simulation results match perfectly.
We notice from all the figures that the OTS scheme outperforms all other schemes in both the BKU and BKA cases. Further, it is noticed that the BKA cases outperform the BKU cases corresponding to their associated selection schemes, as the backhaul link activity knowledge improves the secrecy performance. 

\subsection{Secrecy Outage Probability}
\begin{figure}
\centering 
\includegraphics[width=3.6in]{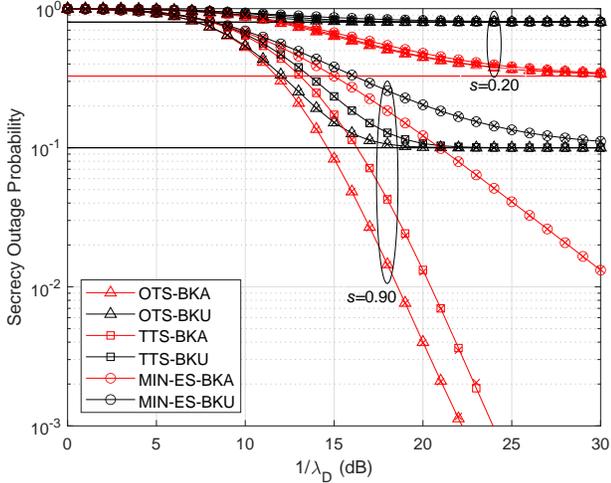} 
\caption{Variation of SOP with ${1}/{\lambda_{\text{D}}}$  for different values of $s$.}
\label{fig_sop_s}
\end{figure}

In Fig. \ref{fig_sop_s}, the SOP has been plotted versus destination channel SNR 1/$\lambda_{\text{D}}$ for different values of $s = \{0.20, 0.90\}$ with $N=5$ and $K=3$. We observe that the secrecy performance improves with the increase in SNR $1/\lambda_{\text{D}}$ till it reaches its asymptotic value. We also notice that the performance improves with the increase in $s$. Further at low SNRs, the MIN-ES scheme performs better than the TTS scheme. After a certain value of 1/$\lambda_{\text{D}}$, the TTS scheme overcomes the MIN-ES scheme. This is because selecting the worst eavesdropping link is advantageous over selecting the best destination link among already degraded destination links.
In contrast, selection among the destination links is advantageous at high SNR when the links are usually good. At high SNRs, the performance difference between the TTS scheme and the MIN-ES scheme is also more significant than the difference between the OTS scheme and the TTS scheme. 
It is also observed that the SOP saturates to a constant value at  high SNR, which is represented by the asymptotic straight line. As $s$ increases, saturation levels decrease. Further, the BKA case has a lower saturation level than the BKU case. This can also be confirmed from the numerical analysis in Section \ref{sec_asymptotic_analysis_sop_unreliable}.

\begin{figure}
\centering 
\includegraphics[width=3.6in] {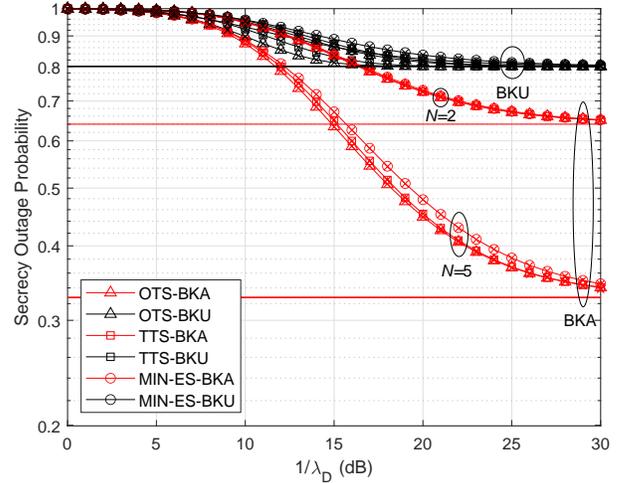}  
\caption{Variation of SOP with ${1}/{\lambda_{\text{D}}}$  for different values of $N$.}
\label{fig_sop_n}
\end{figure}
Fig. \ref{fig_sop_n} shows the SOP versus 1/$\lambda_{\text{D}}$ for different values of $N = \{2,  5\}$ when $s = 0.20$ and $K=3$. We observe that the SOP decreases for all considered schemes as the number of transmitters increases. More the number of transmitter choices, the better the secrecy performance of the system. However, the rate of performance improvement is far slower in the case of BKU. Thus, we notice that the secrecy performance of the system with fewer transmitters, i.e., $N=2$ for BKA cases, outperforms the BKU cases with a large number of transmitters, i.e., $N=5$ for any selection scheme. The SOP for the BKU case for all the selection schemes saturates to the same constant value irrespective of $N$ as it depends only on $s$, however,  for the BKA case, the SOP for all the selection schemes saturates to different levels as it depends both on $s$ and $N$.

\begin{figure}
\centering 
\includegraphics[width=3.6in]{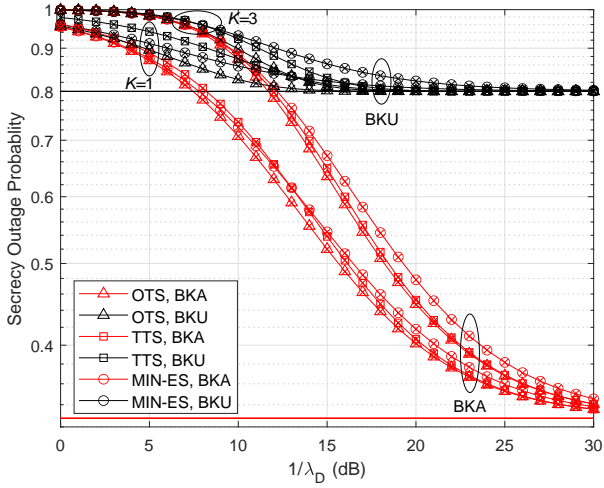} 
\caption{ Variation of SOP with ${1}/{\lambda_{\text{D}}}$ for different values of $K$.}
\label{fig_sop_k}
\end{figure}

Fig. \ref{fig_sop_k} plots the SOP versus 1/$\lambda_{\text{D}}$ for the proposed selection schemes for different number of eavesdroppers $K=\{1,3\}$ when $s=0.90$ and  $N=5$. The eavesdroppers' SNRs  are set at $1/\lambda_{\text{E}}^{(1)}=6$ dB,  $1/\lambda_{\text{E}}^{(2)}=9$ dB, and $1/\lambda_{\text{E}}^{(3)}=13$ dB when $K=3$ for each $n\in\{1,\ldots, N\}$, and $1/\lambda_{\text{E}}^{(1)}=13$ dB when $K=1$.  We note that the secrecy performance degrades with the increased number of eavesdroppers.  
The MIN-ES scheme performs better than the TTS at low SNRs, and the TTS scheme outperforms the MIN-ES scheme at high-SNR, which was also found in Fig. \ref{fig_sop_s}. We find that irrespective of the selection schemes, the asymptotes for both the values of $K$ coincide for a given backhaul link activity knowledge case (BKU and BKA). This shows that the asymptotes are independent of $K$ and the selection schemes and only depend on whether the backhaul link activity knowledge is available or not.   This confirms the analysis for asymptotic SOP in Section \ref{sec_asymptotic_analysis_sop_unreliable}. 


\subsection{Ergodic Secrecy Rate}
\begin{figure}
\centering 
\includegraphics[width=3.6in]{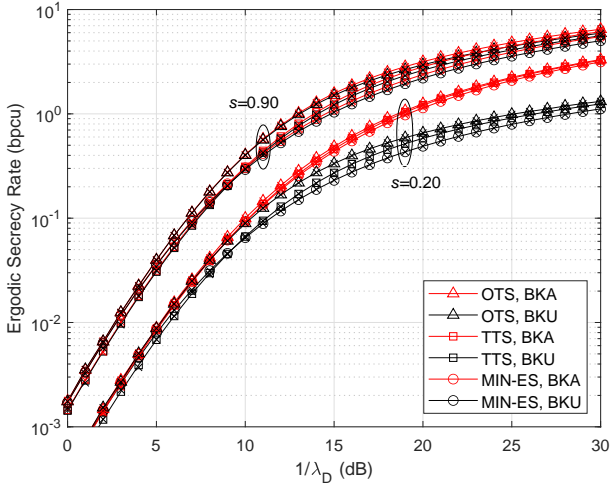} 
\caption{Variation of ESR with ${1}/{\lambda_{\text{D}}}$  for different values of $s$.}
\label{fig_erg_s}
\end{figure}

Fig. \ref{fig_erg_s} plots the ESR versus  $1/\lambda_{\text{D}}$  for two different values of the backhaul link reliability factor, $s=\{0.90, 0.20\}$ with $N=5$ and $K=3$.
The ESR performance of the transmitter selection schemes follow a similar trend of the SOP, e.g., the MIN-ES scheme performs better than the TTS scheme at low SNRs, and the ESR improves as $s$ increases. However, it is noticed that the ESR improvement from the BKU to BKA case by utilizing the backhaul activity knowledge when $s$ is low is more significant as compared to the case when $s$   high. This implies that the knowledge of backhaul link activity can significantly improve the secrecy performance when the backhaul link reliability is relatively low.

\begin{figure}
\centering 
\includegraphics[width=3.6in]{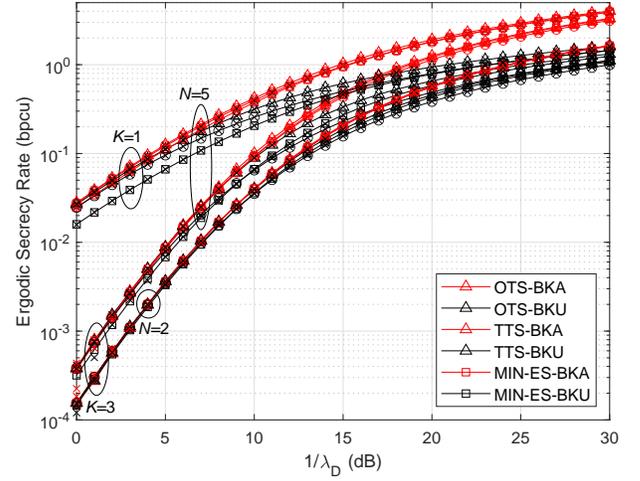} 
\caption{Variation of ESR with ${1}/{\lambda_{\text{D}}}$  for different values of $N$ and $K$.}
\label{fig_erg_nk}
\end{figure}

Fig. \ref{fig_erg_nk} shows the effect of a different number of eavesdroppers $K=\{1, 3\}$ and $N=\{2,5\}$ on the ESR  for all the selection schemes when $s=0.2$. The eavesdroppers' SNRs  are set at $1/\lambda_{\text{E}}^{(1)}=6$ dB,  $1/\lambda_{\text{E}}^{(2)}=9$ dB, and $1/\lambda_{\text{E}}^{(3)}=13$ dB when $K=3$ for each $n\in\{1,\ldots, N\}$, and $1/\lambda_{\text{E}}^{(1)}=13$ dB when $K=1$. We notice that the ESR performance improves when the number of transmitters increases.  However, the system's secrecy performance decreases as the number of eavesdroppers increases.

\begin{figure}
\centering 
\includegraphics[width=3.6in]{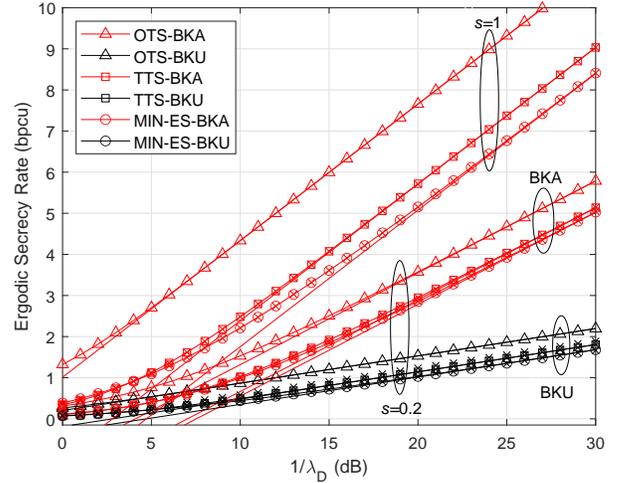} 
\caption{Variation of ESR with ${1}/{\lambda_{\text{D}}}$ for perfect and unreliable backhaul links.}
\label{fig_erg_asym}
\end{figure}

In Fig. \ref{fig_erg_asym}, the ESR and its asymptotic values are plotted versus $\frac{1}{\lambda_{\text{D}}}$ when $N=5$ and $K=1$ for the perfect and unreliable backhaul (BKU and BKA) cases.
The ESR axis is shown in the linear scale to plot the asymptotes as straight lines following (\ref{eq_straight_line_form_ESR_MINESBKU}). The asymptotic straight lines match well with the exact ESRs at the high-SNR regime. We notice that the perfect backhaul case, i.e., $s=1$, outperforms the unreliable backhaul case, i.e., $s<1$, which is intuitive.   It also has the highest slope.  However, as $s$ decreases from $s=1$, the slope of the ESR in the unreliable case also decreases. Further, among unreliable backhaul cases, the BKA case has a better slope than the BKU case. This signifies that the rate of improvement of ESR in the BKA case with SNR is better than in the BKU case due to the available backhaul link activity knowledge.
It is also noticed that the slope for the BKU cases, irrespective of the selection schemes, is the same.  So does the slope of the BKA cases for all the selection schemes. This shows that the slope does not depend on the selection schemes only depends on the backhaul link activity knowledge and the reliability factor. This can also be confirmed from the derived slopes in the asymptotic analysis in Section \ref{section_asymptotic_ESR}.

\section{Conclusions}  \label{conclusions} 
We adopt a generalized methodology to include the backhaul reliability factor in the secrecy performance analysis of transmitter selection schemes against colluding eavesdroppers and provide a uniform approach to obtain exact closed-form NZSR, SOP, and ESR expressions from the distribution of the ratio of destination link and eavesdropping link SNR irrespective of selection schemes.  Simplified asymptotic expressions are provided for each transmitter selection scheme to understand the effect of system parameters and available backhaul activity knowledge. We observe that the asymptotic NZSR and SOP performance can not be improved by increasing the number of transmitters when the backhaul link activity knowledge is unavailable and saturates to a value depending on the backhaul reliability factor only. In contrast, asymptotic saturation can be improved by increasing the number of transmitters when the backhaul link activity knowledge is available. In both cases, the asymptotic saturation value does not depend on the number of eavesdroppers. 
The effect of unavailable backhaul activity knowledge degrades the rate of improvement of the ESR with SNR. Though the number of eavesdroppers has no effect on the rate of improvement of the ESR with SNR, the ESR degrades with the increase in the number of eavesdroppers. We notice that the knowledge of backhaul link activity can significantly improve the secrecy performance when the backhaul link reliability is relatively low. We also find that the MIN-ES scheme is better at low SNR than the TTS scheme, however, the TTS scheme outperforms it when SNR improves.

\bibliographystyle{IEEEtran}
\bibliography{IEEEabrv,BACKHAUL_BW}

\begin{thebibliography}{10}
\providecommand{\url}[1]{#1}
\csname url@samestyle\endcsname
\providecommand{\newblock}{\relax}
\providecommand{\bibinfo}[2]{#2}
\providecommand{\BIBentrySTDinterwordspacing}{\spaceskip=0pt\relax}
\providecommand{\BIBentryALTinterwordstretchfactor}{4}
\providecommand{\BIBentryALTinterwordspacing}{\spaceskip=\fontdimen2\font plus
\BIBentryALTinterwordstretchfactor\fontdimen3\font minus
  \fontdimen4\font\relax}
\providecommand{\BIBforeignlanguage}[2]{{%
\expandafter\ifx\csname l@#1\endcsname\relax
\typeout{** WARNING: IEEEtran.bst: No hyphenation pattern has been}%
\typeout{** loaded for the language `#1'. Using the pattern for}%
\typeout{** the default language instead.}%
\else
\language=\csname l@#1\endcsname
\fi
#2}}
\providecommand{\BIBdecl}{\relax}
\BIBdecl

\bibitem{Xiaohu20145G}
X.~Ge, H.~Cheng, M.~Guizani, and T.~Han, ``{5G} wireless backhaul networks:
  challenges and research advances,'' \emph{IEEE Network}, vol.~28, no.~6, pp.
  6--11, Nov. 2014.

\bibitem{Osseiran20145G}
A.~{Osseiran}, F.~{Boccardi}, V.~{Braun}, K.~{Kusume}, P.~{Marsch},
  M.~{Maternia}, O.~{Queseth}, M.~{Schellmann}, H.~{Schotten}, H.~{Taoka},
  H.~{Tullberg}, M.~A. {Uusitalo}, B.~{Timus}, and M.~{Fallgren}, ``Scenarios
  for 5{G} mobile and wireless communications: the vision of the metis
  project,'' \emph{IEEE Communications Magazine}, vol.~52, no.~5, pp. 26--35,
  May 2014.

\bibitem{OnurBackhaul2022}
B.~Tezergil and E.~Onur, ``Wireless backhaul in 5g and beyond: Issues,
  challenges and opportunities,'' \emph{IEEE Communications Surveys \&
  Tutorials}, vol.~24, no.~4, pp. 2579--2632, Sep. 2022.

\bibitem{wyner_wiretap}
A.~D. Wyner, ``The wire-tap channel,'' \emph{Bell System Technical Journal},
  vol.~54, no.~8, pp. 1355--1387, 1975.

\bibitem{Gamal_On_the_Sec_Cap_Fad_Ch}
P.~K. Gopala, L.~Lai, and H.~El~Gamal, ``On the secrecy capacity of fading
  channels,'' \emph{IEEE Transactions on Information Theory}, vol.~54, no.~10,
  pp. 4687--4698, Oct. 2008.

\bibitem{bletsas2006}
A.~Bletsas, A.~Khisti, D.~Reed, and A.~Lippman, ``A simple cooperative
  diversity method based on network path selection,'' \emph{IEEE Journal on
  Selected Areas in Communications}, vol.~24, no.~3, pp. 659--672, Mar. 2006.

\bibitem{Kundu17_sopeaves}
S.~{Ghose}, C.~{Kundu}, and O.~A. {Dobre}, ``Secrecy outage of proactive relay
  selection by eavesdropper,'' in \emph{Proc. IEEE Global Communications
  Conference}, Singapore, Dec. 2017, pp. 1--6.

\bibitem{chinmoy_GC16}
C.~Kundu, T.~M.~N. Ngatched, and O.~A. Dobre, ``Relay selection to improve
  secrecy in cooperative threshold decode-and-forward relaying,'' in
  \emph{Proc. IEEE Global Communications Conference}, Washington, DC, USA, Dec.
  2016, pp. 1--6.

\bibitem{Kundu15_dual}
C.~{Kundu}, S.~{Ghose}, and R.~{Bose}, ``Secrecy outage of dual-hop
  regenerative multi-relay system with relay selection,'' \emph{IEEE
  Transactions on Wireless Communications}, vol.~14, no.~8, pp. 4614--4625,
  Aug. 2015.

\bibitem{Vu2017Secure}
V.~Truong, M.-N. Nguyen, C.~Kundu, and D.-L. Nguyen, ``Secure cognitive radio
  networks with source selection and unreliable backhaul connections,''
  \emph{IET Communications}, vol.~12, no.~15, pp. 1771--1777, Mar. 2018.

\bibitem{kim2016secrecy}
K.~J. Kim, P.~L. Yeoh, P.~V. Orlik, and H.~V. Poor, ``Secrecy performance of
  finite-sized cooperative single carrier systems with unreliable backhaul
  connections,'' \emph{IEEE Transactions on Signal Processing}, vol.~64,
  no.~17, pp. 4403--4416, Sep. 2016.

\bibitem{Yincheng}
C.~Yin, H.~T. Nguyen, C.~Kundu, Z.~Kaleem, E.~Garcia-Palacios, and T.~Q. Duong,
  ``Secure energy harvesting relay networks with unreliable backhaul
  connections,'' \emph{IEEE Access}, vol.~6, pp. 12\,074--12\,084, Mar. 2018.

\bibitem{TAKhan2015Coperative}
T.~A. {Khan}, P.~{Orlik}, K.~J. {Kim}, and R.~W. {Heath}, ``Performance
  analysis of cooperative wireless networks with unreliable backhaul links,''
  \emph{IEEE Communications Letters}, vol.~19, no.~8, pp. 1386--1389, Aug.
  2015.

\bibitem{kimBackhaulSelection}
K.~J. Kim, T.~A. Khan, and P.~V. Orlik, ``Performance analysis of cooperative
  systems with unreliable backhauls and selection combining,'' \emph{IEEE
  Transactions on Vehicular Technology}, vol.~66, no.~3, pp. 2448--2461, Jun.
  2017.

\bibitem{Debbah2012_het_backhaul}
F.~Pantisano, M.~Bennis, W.~Saad, M.~Debbah, and M.~Latva-aho, ``On the impact
  of heterogeneous backhauls on coordinated multipoint transmission in
  femtocell networks,'' in \emph{Proc. IEEE International Conference on
  Communications}, Ottawa, Canada, Jun. 2012, pp. 5064--5069.

\bibitem{Kundu_TVT19}
J.~{Zhang}, C.~{Kundu}, O.~A. {Dobre}, E.~{Garcia-Palacios}, and N.~S. {Vo},
  ``Secrecy performance of small-cell networks with transmitter selection and
  unreliable backhaul under spectrum sharing environment,'' \emph{IEEE
  Transactions on Vehicular Technology}, vol.~68, no.~11, pp. 10\,895--10\,908,
  Sep. 2019.

\bibitem{wafaiVTC}
B.~Wafai, C.~Kundu, A.~Dubey, J.~Zhang, and M.~F. Flanagan, ``Transmitter
  selection for secrecy in cognitive small-cell networks with backhaul
  knowledge,'' in \emph{Proc. IEEE Vehicular Technology Conference}, Helsinki,
  Finland, Apr. 2021, pp. 1--6.

\bibitem{kotwal2021transmitter}
S.~Bhushan~Kotwal, C.~Kundu, S.~Modem, A.~Dubey, and M.~F. Flanagan,
  ``Transmitter selection for secrecy in a frequency selective fading channel
  with unreliable backhaul,'' in \emph{Proc. IEEE Vehicular Technology
  Conference}, Helsinki, Finland, Apr. 2021, pp. 1--7.

\bibitem{chinmoy_letter2021}
C.~Kundu and M.~F. Flanagan, ``Ergodic secrecy rate of optimal source selection
  in a multi-source system with unreliable backhaul,'' \emph{IEEE Wireless
  Communications Letters}, vol.~10, no.~5, pp. 1118--1122, Feb. 2021.

\bibitem{kotwalTVT_backhaul}
\BIBentryALTinterwordspacing
S.~B. Kotwal, C.~Kundu, S.~Modem, and M.~F. Flanagan, ``Transmitter selection
  for secrecy in frequency-selective fading with multiple eavesdroppers and
  wireless backhaul,'' 2022. [Online]. Available:
  \url{https://arxiv.org/pdf/2212.11543.pdf}
\BIBentrySTDinterwordspacing

\bibitem{kotwalTVT}
\BIBentryALTinterwordspacing
S.~B. Kotwal, C.~Kundu, S.~Modem, A.~Dubey, and M.~F.~Flanagan, ``Ergodic
  secrecy rate of optimal source-destination pair selection in
  frequency-selective fading,'' \emph{IEEE Transactions on Vehicular
  Technology}, pp. 1--16, Nov. 2022. [Online]. Available:
  \url{https://ieeexplore.ieee.org/abstract/document/9953571}
\BIBentrySTDinterwordspacing

\bibitem{MRC_Akkouchi}
M.~Akkouchi, ``On the convolution of exponential didtributions,''
  \emph{Chungcheong Mathematics Journal}, vol.~21, no.~4, pp. 501--510, Dec.
  2008.

\bibitem{yang2013physical}
N.~Yang, H.~A. Suraweera, I.~B. Collings, and C.~Yuen, ``Physical layer
  security of tas/mrc with antenna correlation,'' \emph{IEEE Transactions on
  Information Forensics and Security}, vol.~8, no.~1, pp. 254--259, Jan. 2013.

\bibitem{wang2015security}
L.~Wang, K.~J. Kim, T.~Q. Duong, M.~Elkashlan, and H.~V. Poor, ``Security
  enhancement of cooperative single carrier systems,'' \emph{IEEE Transactions
  on Information Forensics and Security}, vol.~10, no.~1, pp. 90--103, Jan.
  2015.

\bibitem{yang2013transmit}
N.~Yang, P.~L. Yeoh, M.~Elkashlan, R.~Schober, and I.~B. Collings, ``Transmit
  antenna selection for security enhancement in {MIMO} wiretap channels,''
  \emph{IEEE Transactions on Communications}, vol.~61, no.~1, pp. 144--154,
  Jan. 2013.

\bibitem{table_of_integrals}
I.~Gradshteyn and I.~Ryzhik, \emph{Table of Integrals, Series, and Products,
  7th ed.}\hskip 1em plus 0.5em minus 0.4em\relax San Diego, CA, USA: Academic,
  2007.

\bibitem{wang_PLS_2014}
L.~Wang, N.~Yang, M.~Elkashlan, P.~L. Yeoh, and J.~Yuan, ``Physical layer
  security of maximal ratio combining in two-wave with diffuse power fading
  channels,'' \emph{IEEE Transactions on Information Forensics and Security},
  vol.~9, no.~2, pp. 247--258, Jan. 2014.

\end{thebibliography}

\end{document}